\newcommand{\del}[0]{\partial}
\renewcommand{\Re}{\text{Re}}
\renewcommand{\Im}{\text{Im}}
\newtheorem{theorem}{Theorem}
\newtheorem{lemma}{Lemma}
\newtheorem{corollary}{Corollary}
\newcommand{\SC}{\mathcal{S}}
\newcommand{\Tr}{{\rm Tr}}
\renewcommand{\geq}{\geqslant}
\renewcommand{\Re}{\text{Re}}
\renewcommand{\Im}{\text{Im}}
\newcommand{\bramatket}[3]{\langle #1 \hspace{1pt} | #2 | \hspace{1pt} #3 \rangle}
\newcommand{\bramatketq}[2]{\bramatket{#1}{#2}{#1}}
\renewcommand{\vec}[1]{\boldsymbol{#1}}  
\newcommand*{\id}{\openone}
\begin{document}

\title{Universal Compiling and (No-)Free-Lunch Theorems\\ for Continuous Variable Quantum Learning}

\author{Tyler Volkoff}
\affiliation{Theoretical Division, Los Alamos National Laboratory, Los Alamos, NM, USA}
\author{Zo\"{e} Holmes}
\affiliation{Information Sciences, Los Alamos National Laboratory, Los Alamos, NM, USA}
\author{Andrew Sornborger}
\affiliation{Information Sciences, Los Alamos National Laboratory, Los Alamos, NM, USA}
\affiliation{Quantum Science Center, Oak Ridge, TN 37931, USA}

\begin{abstract}
Quantum compiling, where a parameterized quantum circuit is trained to learn a target unitary, is an important primitive for quantum computing that can be used as a subroutine to obtain optimal circuits or as a tomographic tool to study the dynamics of an experimental system. While much attention has been paid to quantum compiling on discrete variable hardware, less has been paid to compiling in the continuous variable paradigm. Here we motivate several, closely related, short depth continuous variable algorithms for quantum compilation. We analyse the trainability of our proposed cost functions and numerically demonstrate our algorithms by learning arbitrary Gaussian operations and Kerr non-linearities. We further make connections between this framework and quantum learning theory in the continuous variable setting by deriving No-Free-Lunch theorems. These generalization bounds demonstrate a linear resource reduction for learning Gaussian unitaries using entangled coherent-Fock states and an exponential resource reduction for learning arbitrary unitaries using Two-Mode-Squeezed states.
\end{abstract}
\maketitle

\section{Introduction}\label{sc:intro}

Progress in experimental implementations of quantum optical neural networks~\cite{engreview,engdeep,engproc} and extensions of quantum machine learning frameworks to the continuous-variable (CV) setting~\cite{PhysRevLett.118.080501, Miatto2020fastoptimizationof, PhysRevA.102.012417} indicate that quantum photonics is a viable platform for near-term quantum algorithms.
Variational quantum algorithms, 
where a problem-specific cost function is evaluated on a quantum computer~\cite{cerezo2020variationalreview, bharti2021noisy}, while a classical optimiser trains a parameterized quantum circuit to minimise this cost, have been implemented in photonic systems.  
For instance, the variational quantum eigensolver~\cite{peruzzo2014VQE} and variational quantum unsampling~\cite{engunsamp}, i.e., partial characterization of a unitary operator, have both been implemented on integrated photonic processors.
Beyond the fundamental physical advantages of photonic systems, such as a well-characterized set of loss channels and the possibility of room temperature operation, there are computational advantages to CV implementations of variational quantum algorithms such as the existence of efficient quantum error mitigation schemes~\cite{PhysRevA.97.032346,volkhier, su2020error}. 

An important computational task that CV quantum processors are well-suited to is the variational compilation~\cite{Khatri2019quantumassisted,arrazolaml, sharma2019noise} of CV unitaries. 
The task is to optimize a parameterized quantum circuit to learn a given target unitary.  
The target unitary could take the form of a known gate sequence that one seeks to compile into a shorter depth, or more noise resistant, circuit. Hence quantum compiling could be used as a subroutine to reduce the resources required to implement large scale quantum algorithms. 
Alternatively, the target unitary could be the unknown dynamics of a quantum system. In this case, quantum compilation plays a role analogous to, but potentially less resource intensive than, a quantum sensing protocol~\cite{Degen2017QuantumSensing} or unitary process tomography~\cite{Gus2014Process,Baldwin2014Quantum}. Specifically, our CV compiling algorithms make use of Gaussian measurements and CV resources such as intensity and quadrature squeezing, and so do not require preparation of exotic optimal probe states as in an  optimal quantum sensing protocol, nor a large number of measured observables as in process tomography. In this sense, CV compiling provides a new tool for experimental physics. 

In this paper, we establish frameworks for CV variational quantum compiling that are valid for arbitrary CV target unitaries. In contrast to the variational compiling method explored in Ref.~\cite{arrazolaml}, we include entanglement-enhanced methods that can be used to learn an entire unitary rather than just its action on a low lying subspace. We illustrate the wide applicability of our cost functions for CV quantum compiling by numerically demonstrating efficient learning of arbitrary single-mode Gaussian unitaries, the generalized beamsplitter operation, and Kerr non-linearities.

We further make connections between this framework and CV quantum learning theory by deriving ``(No)-Free-Lunch'' theorems. These analytic theorems specify the minimal training data required to learn CV unitary operators in increasingly general settings, providing fundamental bounds on the limits of quantum learning. In particular, the bounds highlight how utilizing entangled training states can reduce the amount of training data required to learn an unknown unitary and thus entanglement could be seen to provide a `free lunch'. We further use these results as an alternative motivation for the cost functions we propose for quantum compiling.

This manuscript is structured as follows. Section~\ref{sec:back} provides a background to quantum compiling, including a discussion of its possible uses and a summary of previously proposed methods for discrete variable quantum compilation. Section~\ref{sec:algs} presents our main results, including the cost functions we propose for CV quantum compiling and an analysis of their trainability. Section~\ref{sec:gng} contains numerical implementations of our proposed CV learning algorithm. Section~\ref{sec:cvnfl} presents our No-Free-Lunch theorems for CV learning.
Section~\ref{sec:disc} summarises and discusses our results.

\section{Background}\label{sec:back}

\subsection{Applications of CV quantum compiling}

The goal of CV quantum compiling is to take a (possibly unknown) unitary $U$ and return a gate sequence $V$, executable on a CV quantum computer, that has approximately the same action as $U$ on any given input state (up to possibly a global phase factor). Here we describe three possible applications of this subroutine. 

\paragraph*{Optimal circuit design.} Quantum compilation could be used to variationally compile CV gate sequences to form optimal subcircuits. By optimal, we primarily mean short depth. However, compilation might also be used to find circuits that naturally compensate for systematic gate errors or that are more generally resistant to noise. The construction of such optimal circuits may prove critical for the successful implementation of larger scale algorithms, including proposals for generating optimal bosonic states in protocols for quantum metrology~\cite{PhysRevA.94.042327} and entanglement extraction~\cite{PhysRevA.100.022331}.

\paragraph*{Experimental quantum physics.} More generally, variational quantum compilation could be used to learn the unknown unitary dynamics of a physical system. In the context of an optical system, one might be interested in studying the optical properties of a new material as sketched in Fig.~\ref{fig:ExperimentalSchematic}. For example, as discussed further in Section~\ref{sec:gng}, one might use quantum compiling to estimate the Kerr effect in nonlinear optical media that cannot itself be directly implemented in a CV quantum circuit. In this manner, variational quantum compilation provides a new tool for experimental physics.

\begin{figure}
    \centering
    \includegraphics[width=0.49\textwidth]{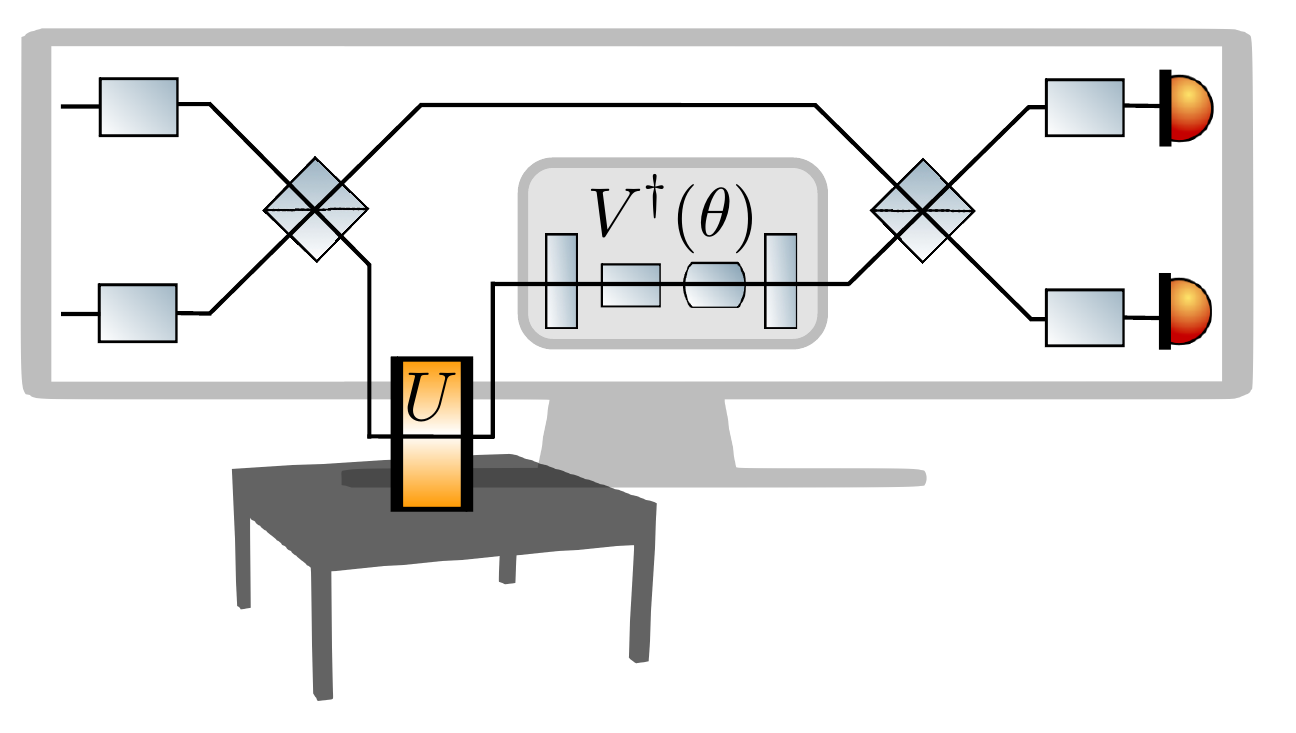}
    \caption{\textbf{Learning experimental CV quantum systems.} Here we sketch an experimental circuit to learn the unitary $U$ implemented by a novel optical material (shown in orange) by training a parameterised quantum circuit $V^\dagger(\theta)$ implemented on an optical quantum computer (shown in grey). For specific details on the proposed circuit, and the cost it calculates, see Fig.~\ref{fig:CostFuncs}(d).}
    \label{fig:ExperimentalSchematic}
\end{figure}

\paragraph*{Structured learning.}
In discrete variable systems, variational quantum compilation has proven useful for learning the spectral decomposition of a unitary operation. This in turn opens up the possibility of simulating beyond the coherence time of a quantum processor~\cite{vff, gibbs2021long, geller2021experimental}. Similarly one could use discrete variable quantum compiling to learn block decompositions of a given unitary which is useful to study the entanglement properties of a system. It would be interesting to explore whether CV quantum compiling could similarly be used to study the spectral or entanglement properties of a given CV unitary, or for simulating the dynamics of CV quantum systems~\cite{PhysRevA.97.062311,PhysRevA.99.022341,bhaskar}.

\subsection{Discrete Variable Quantum Compilation}

Before presenting our algorithm for continuous variable quantum compilation, let us first review the discrete variable Quantum Assisted Quantum Compilation (QAQC) algorithm of Ref.~\cite{Khatri2019quantumassisted}. In QAQC a compilation is found by variationally searching for a gate sequence that minimizes the Hilbert-Schmidt Test cost. This cost, which quantifies how close the compilation is to exact, can be written as the normalized Hilbert-Schmidt inner product between the target unitary $U$ and possible compilation $V$,
\begin{equation}\label{eq:HST}
    C_{\text{HST}}(V,U):= 1-{1\over d^{2}}\vert \text{Tr} \left(V^{\dagger}  U  \right) \vert^{2} \, .
\end{equation}
This cost is faithful, vanishing if and only if $U$ and $V$ differ by a global phase factor, i.e., $V=e^{i\varphi}U$ for some $\varphi\in\mathbb{R}$. Therefore, by minimizing $C_{\text{HST}}$, we learn a unitary $U$ that implements a target $V$ up to a global phase.

The Hilbert-Schmidt Test cost may be computed by the two closely related circuits shown in Fig.~\ref{fig:CostFuncs}(a) and Fig.~\ref{fig:CostFuncs}(b). To see how, we first note that
\begin{align}\label{eq-HS_inner_prod} 
\frac{1}{d}\Tr(V^\dagger U) = \bramatketq{\Phi^+}{V^\dagger U\otimes \mathbb{I}}  \, , 
\end{align}
where $\ket{\Phi}_{AB}$ is the Bell entangled state of two qubit registers $A$ and $B$ of $n=\log_{2}d$ qubits, i.e. $\ket{\Phi}_{AB}:=\bigotimes_{j=1}^{n}\ket{\Phi^{+}}_{A_{j}B_{j}}$ with $\ket{\Phi^{+}}:={1\over \sqrt{2}}(\ket{00}+\ket{11})$. 
It thus follows that we can write 
\begin{equation}\label{eq:C_HST_Bell}
    C_{\text{HST}}(V,U)= 1- \vert\bramatketq{\Phi^+}{V^\dagger U \otimes \mathbb{I}}\vert^{2} \, 
\end{equation}
and $C_{\text{HST}}$ can be computed using the circuit shown in Fig.~\ref{fig:CostFuncs}(a).  Due to the ricochet property of the state $\ket{\Phi}$, viz., $X\otimes \mathbb{I}\ket{\Phi^{+}}=\mathbb{I}\otimes X^{T}\ket{\Phi^{+}}$
for linear operator $X$, the Hilbert-Schmidt test cost can alternately be written as
\begin{equation}\label{eq:C_HST_Bell_ric}
    C_{\text{HST}}(V,U)= 1- \vert\bramatketq{\Phi^+}{U\otimes V^*}\vert^{2} \, .
\end{equation}
Thus $C_{\text{HST}}(V,U)$ can also be computed with the target and ansatz unitaries applied in parallel, instead of in series, reducing the total circuit depth as shown in Fig.~\ref{fig:CostFuncs}(b).

\medskip
Finally, we note that the Hilbert Schmidt Test cost can be related to the average gate fidelity between $U$ and $V$. Specifically, it can be shown \cite{HHH99,nielsen02} that
\begin{equation}
\label{eq:HST_Fbar}
C_{\text{HST}}(U,V) = \frac{d+1}{d}\left(1-\overline{F}(U,V)\right)\,
\end{equation}
where 
\begin{equation}
    \overline{F}(U,V)\coloneqq\int_{\psi}| \bra{\psi} V^\dagger U\ket{\psi}|^2~\text{d}\psi\,
\end{equation}
is the average fidelity of states acted
upon by $V$ versus those acted upon by $U$, with the
average being over all pure states according to the Haar measure. In theory, Eq.~\eqref{eq:HST_Fbar} provides a third way of measuring $C_{\text{HST}}$. One could perform a Loschmidt echo test, as shown in Fig.~\ref{fig:CostFuncs}(c), using different input states that are sampled according to the Haar measure. However, in practice, this is not a viable training technique, since, as we will discuss in Section~\ref{sec:tmssnfl}, in order to fully learn $U$, the average would need to be taken over an exponentially large number of training states. Instead, the significance of Eq.~\eqref{eq:HST_Fbar} lies in the fact that it embues $C_{\text{HST}}$ with operational meaning for non-zero values since it entails that low cost values correspond to high average gate fidelities.

\begin{figure*}[t!]
    \includegraphics[width=0.95\textwidth]{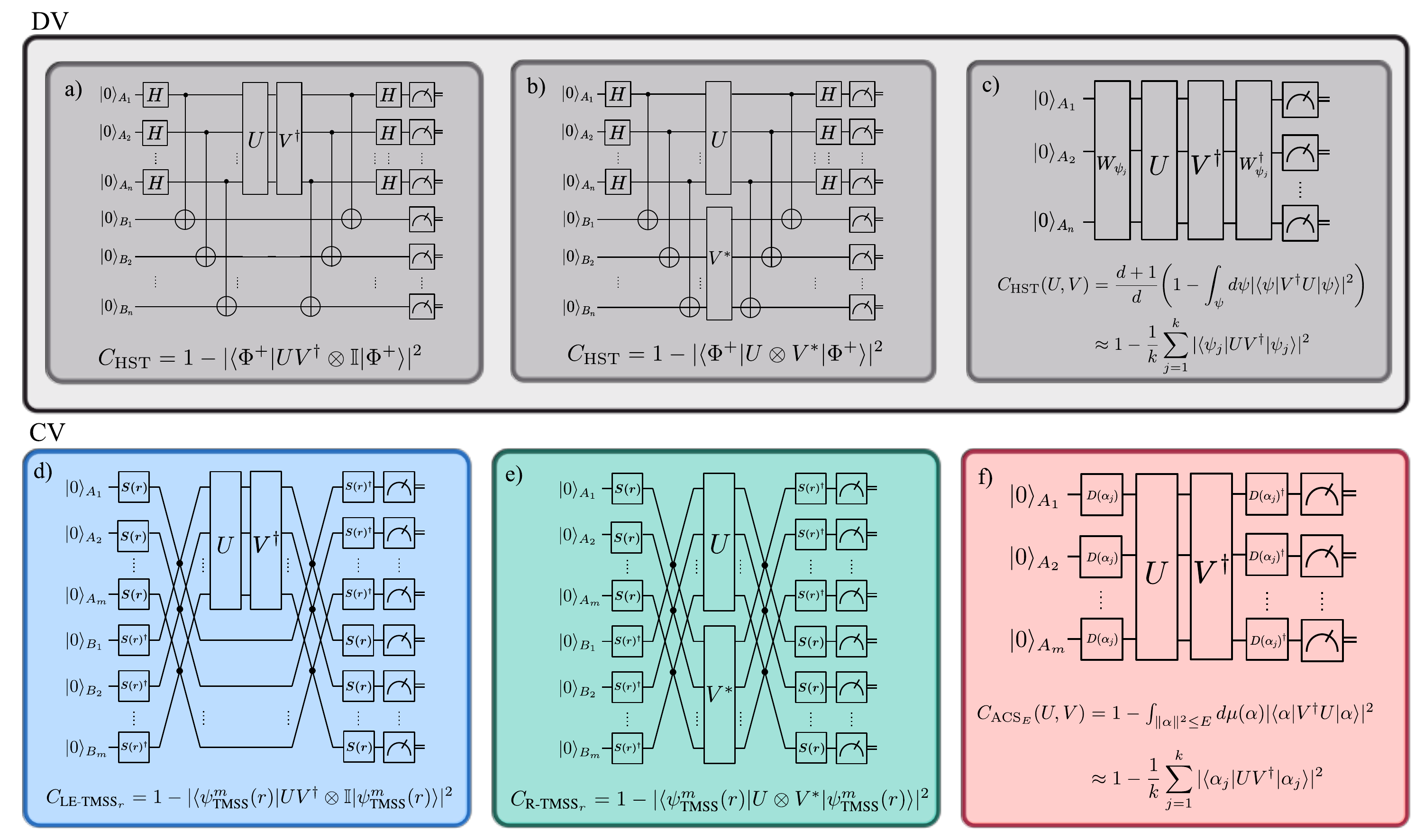}
    \caption{\textbf{Schematic of cost functions.} In this figure we show how the three different cost functions we propose for CV quantum compiling (d - f) are related to three different ways of measuring the Hilbert-Schmidt Test cost for DV compilation (a - c). The probability to measure the all-zero state on all $2n$ qubits in a) and b) is equal to $\Tr[U V^\dagger]/d^2$ and hence can be used to compute $C_{\rm HST}$. In c) $W_{\psi_j}$ is the unitary that prepares the state $\ket{\psi_j}$, i.e. $W_{\psi_j} \ket{0} = \ket{\psi_j}$, therefore the probability to measure the all-zero state on $n$ qubits is equal to $| \bra{\psi_j } U V^\dagger \ket{\psi_j}|^2 $. $C_{\rm HST}$ could theoretically be estimated by running this circuit over a Haar random ensemble of training states but for large problems this is exponentially inefficient. In d) and e)  $S(r)$ is the unitary single mode squeeze operator and $\rangle\!\!\! \bullet \!\!\!\langle$ is a 50:50 beamsplitter that entangles the squeezed registers. The probability to measure the all-zero state on all $2m$ modes in d) and e) is equal to $1 - C_{\rm LE-TMSS}$ and $1 - C_{\rm R-TMSS}$ respectively. In f) $D(\alpha)$ is the unitary single mode displacement operator. The mean probability to obtain all-zero state on $m$ modes is equal to the average of the $k$ values $\vert\bra{\alpha_j } U V^\dagger \ket{\alpha_j}\vert^2 $, which is used to estimate $C_{\rm ACS}$.} 
    \label{fig:CostFuncs}
\end{figure*}

\section{Universal continuous variable quantum compiling}\label{sec:algs}

\subsection{Cost Functions}\label{sec:costs}

For continuous variable quantum compiling, we suppose the target and compiled unitaries $U$ and $V$ act on $m$ CV modes. More concretely, the mathematical setting for CV quantum compiling is the Hilbert space $\mathcal{H}$ of $m$ quantum harmonic oscillators, and the operator algebra $B(\mathcal{H})$ of bounded linear operators. In this infinite dimensional space, the Hilbert-Schmidt inner product used in Eq.~\eqref{eq:HST} is not well-defined and hence cannot be used for CV quantum compilation. However, as indicated in Fig.~\ref{fig:CostFuncs}, we can use generalizations of the three different ways in which Eq.~\eqref{eq:HST} can be implemented, namely via Eq.~\eqref{eq:C_HST_Bell}, Eq.~\eqref{eq:C_HST_Bell_ric} and Eq.~\eqref{eq:HST_Fbar}, to define costs for CV quantum compiling. In contrast to the DV case where Eq.~\eqref{eq:C_HST_Bell}, Eq.~\eqref{eq:C_HST_Bell_ric} and Eq.~\eqref{eq:HST_Fbar} are three equivalent ways to estimate the same cost, here the three costs are fundamentally distinct. 

\medskip

\paragraph*{Loschmidt echo Two-Mode-Squeezed State cost} Let us start by defining a CV generalization of Eq.~\eqref{eq:C_HST_Bell}. To do so, we first note that Two-Mode-Squeezed States (TMSS) are a natural analogue of Bell states for CV systems. The Two-Mode-Squeezed State, acting between two $m$-mode registers $A$ and $B$, is defined as 
\begin{equation}
\begin{aligned}
&\ket{\psi^m_{\text{TMSS}}(r)} := \bigotimes_{j=1}^{m} \ket{\psi_{\text{TMSS}}(r)}_{A_jB_j} \ \ \ \text{with} \\
&\ket{\psi_{\text{TMSS}}(r)}_{A_jB_j} \propto \sum_{n=0}^{\infty}(\tanh r)^{n}\ket{n}_{A_{j}} \otimes \ket{n}_{B_{j}} 
\end{aligned}\label{eqn:tmssa}
\end{equation}
where $\{ \ket{n} \}_{n=0}^\infty$ is the Fock basis and $r$ is a squeezing parameter. To highlight the connection between TMSS and Bell states it is helpful to consider its truncated variant
\begin{align}
\ket{\psi^{\mathfrak{r}}_{\text{TMSS}}(r)}&:=\sqrt{1-\tanh^{2}r\over 1-\tanh^{2\mathfrak{r}}r}\sum_{n=0}^{\mathfrak{r}-1}(\tanh r)^{n}\ket{n}\otimes \ket{n} \, ,
\label{eqn:tmsslim}
\end{align} 
which tends to the standard TMSS in the limit that $\mathfrak{r}$ tends to infinity, i.e. $\lim_{\mathfrak{r}\rightarrow \infty}\ket{\psi^{\mathfrak{r}}_{\text{TMSS}}(r)} =\ket{\psi_{\text{TMSS}}(r)}$. For finite $\mathfrak{r}$ the truncated TMSS tends to a Bell state as $r$ tends to infinity, that is
\begin{align}
\lim_{r\rightarrow \infty}\ket{\psi^{\mathfrak{r}}_{\text{TMSS}}(r)} &= {1\over \sqrt{\mathfrak{r}}}\sum_{n=0}^{\mathfrak{r}-1}\ket{n}\otimes \ket{n} = \ket{\Phi^+} \, .
\end{align}
In this sense, the TMSS may be viewed as a CV generalization of the Bell state.

More generally, TMSSs are highly entangled states which, by reducing the number of measurements necessary to attain a given signal-to-noise ratio, have proven to be an important resource in quantum metrology~\cite{Ligo2013Squeezing,Lawrie2019Quantum, Gessner2020Multiparameter}. Moreover, TMSSs were numerically shown to be nearly optimal for measuring the fidelity of noisy CV quantum teleportation channels~\cite{Sharma2020Opt}. 
These examples suggest that TMSSs may also be valuable for the unitary channel discrimination task we consider here. 
This is confirmed in Section~\ref{sec:cvnfl}, where we use the entanglement-enhanced No-Free-Lunch theorem of Ref.~\cite{nopree} to argue that training on a single TMSS minimizes the generalization error.

This motivates our first proposed cost function to train an $m$-mode hypothesis unitary $V$ to match an $m$-mode target unitary $U$ as the following,
\begin{equation}
    C_{\text{\footnotesize LE-TMSS}_{r}}(V,U):= 1- \vert \bra{\psi^m_{\text{TMSS}}(r)}  UV^\dagger\otimes \id \ket{\psi^m_{\text{TMSS}}(r)} \vert^{2} \, .
    \label{eqn:costtmss}
\end{equation}
This is the CV analogue of Eq.~\eqref{eq:C_HST_Bell} obtained by using an $m$-mode TMSS instead of a Bell state. 
We call this cost, which is evidently faithful by construction, the Loschmidt Echo Two-Mode Squeezed State (LE-TMSS) cost since it measures the inner product between $V \otimes \id \ket{\psi^m_{\text{TMSS}}(r)} $ and $U \otimes \id \ket{\psi^m_{\text{TMSS}}(r)} $ using the Loschmidt Echo circuit sketched in Fig.~\ref{fig:CostFuncs}(d). 

To understand the structure of the circuit that we propose to measure $ C_{\text{\footnotesize LE-TMSS}_{r}}$, it is helpful to recall that the TMSS can be written as 
\begin{equation}
\ket{\psi_{\text{TMSS}}(r)}_{A_jB_j} =e^{{\pi\over 4}(a_j b_j^{\dagger}-a_j^{\dagger}b_j)}S(-r)\ket{0}_{A_j}\otimes S(r)\ket{0}_{B_j}
 \label{eqn:tmssbs}
 \end{equation}
where $a_j$ and $b_j$ are the annihilation operators on the $A_j$ and $B_j$ modes respectively, and $S(r)$ is the single mode squeezing operator~\cite{mandel}. It follows that an $m$-mode TMSS can be prepared across two $m$-mode registers $A$ and $B$ by first negatively squeezing the modes on register $A$ and positively squeezing the modes on register $B$ and then pairwise entangling the modes $A_j$ and $B_j$ (for $j =1$ to $j = m$) using a network of 50:50 beamsplitters. 

As shown in Fig.~\ref{fig:CostFuncs}(d), preparing a TMSS in this manner is the first step of the circuit to measure $C_{\text{\footnotesize LE-TMSS}_{r}}$. The second step is to apply the target unitary $U$ and the inverse of the ansatz $V^\dagger$ to register $A$. The final step is to implement the inverse of the $m$-mode TMSS state preparation in order to measure the overlap with the $m$-mode TMSS. This is done by first inverting the beamsplitter network and then reversing the initial local squeezing. The inverse squeezing can be carried out either by a two-step process consisting of an active optical unitary followed by an on-off photodetection measurement, or in a one-step process by an ideal general-dyne measurement~\cite{yuen}. The probability to obtain the measurement outcome in which all $2m$
modes are in the $\ket{0}$ state, i.e. the vacuum state, is equal to $\vert \bra{\psi^m_{\text{TMSS}}(r)}  UV^\dagger\otimes \id \ket{\psi^m_{\text{TMSS}}(r)} \vert^{2} $. Hence this circuit can be used to measure $C_{\text{\footnotesize LE-TMSS}_{r}}$ as claimed.

\medskip

\paragraph*{Ricocheted Two-Mode-Squeezed State cost}
Unlike the Bell states utilized in discrete variable quantum compiling algorithms, the TMSS only satisfies an approximate ricochet property for finite $r$. That is, with $\ket{\psi_{\text{TMSS}}^{m}(r)}$ defined as in (\ref{eqn:tmssa})
\begin{equation}
   UV^\dagger\otimes \id_{B} \ket{\psi^m_{\text{TMSS}}(r)} \approx U_{A}\otimes V^{*}_{B} \ket{\psi^m_{\text{TMSS}}(r)} 
\end{equation}
with the exact property only holding in the limit that $r \rightarrow \infty$ or for specially chosen $U$ and $V$. 
Consequently, the Ricocheted version of the Two-Mode Squeezed State cost function, i.e. 
\begin{equation}
    C_{\text{\footnotesize R-TMSS}_{r}}(V,U):= 1- \vert \bra{\psi^m_{\text{TMSS}}(r)}  U_{A}\otimes V^{*}_{B} \ket{\psi^m_{\text{TMSS}}(r)} \vert^{2} \, ,
    \label{eqn:costtmss2}
\end{equation}
is  equal to (\ref{eqn:costtmss}) in the limit $r\rightarrow \infty$. The circuit for computing (\ref{eqn:costtmss2}), which is shown in Fig.~\ref{fig:CostFuncs}(e), is identical to the circuit used to measure $C_{\text{\footnotesize LE-TMSS}}$ but with the target and ansatz unitaries prepared in parallel rather than series. The difference between cost functions (\ref{eqn:costtmss}) and (\ref{eqn:costtmss2}) depends on $V$. In Appendix \ref{app:a} we show how the cost functions differ in expectation over finite rank $V$.  The calculation shows that even if the size of $V$ increases multiplicatively, it is sufficient to increase the squeezing parameter $r$ additively in order to make the cost functions (\ref{eqn:costtmss}) and (\ref{eqn:costtmss2}) approximately equal.

Although the $C_{\text{\footnotesize R-TMSS}_{r}}$ cost can be computed by a simple circuit, it has a drawback that its minimum need not be zero when optimizing $V$ for a given $U$. Hence, when used for variational compiling, it will be hard to determine when to terminate the optimization loop. It is therefore helpful to define a normalized version of (\ref{eqn:costtmss2})
\begin{equation}
    \tilde{C}_{\text{\footnotesize R-TMSS}_{r}}(V,U):= 1- \frac{\vert \bra{\psi^m_{\text{TMSS}}(r)}  U_{A}\otimes V^{*}_{B} \ket{\psi^m_{\text{TMSS}}(r)} \vert^{2}}{\mathcal{N}_U \mathcal{N}_V} \, .
    \label{eqn:costtmssnorm}
\end{equation}
where the normalization terms, 
\begin{equation}
    \mathcal{N}_X := \vert \bra{\psi^m_{\text{TMSS}}(r)}  X_{A}\otimes X^{*}_{B} \ket{\psi^m_{\text{TMSS}}(r)} \vert
    \label{eqn:normcostnorm}
\end{equation}
for $X = U$ and $X = V$, (\ref{eqn:normcostnorm}) can be calculated using the same circuit to measure $C_{\text{\footnotesize R-TMSS}_{r}}$. As shown in Appendix~\ref{app:ffn}, this normalized cost $\tilde{C}_{\text{\footnotesize R-TMSS}_{r}}$ is faithful, vanishing if and only if $U$ and $V$ agree up to a global phase. 

Given the need to evaluate the normalization terms, as well as the original cost term, this cost is slightly more resource intensive than the LS-TMSS cost. However, the reduction in circuit depth achieved by using the approximate ricochet property may compensate for this in experimental contexts where coherence lifetimes are short. 

\medskip

\paragraph*{Averaged coherent states cost} It is not possible to define a cost which is directly analogous to Eq.~\eqref{eq:HST_Fbar} in a CV context as there is no direct equivalent to the Haar measure for CV states because of the infinite dimensionality of the Hilbert space for CV systems. Instead, one can consider averaging over a family of states up to a specific energy bound. In Ref~\cite{arrazolaml} a cost is defined in this manner as an average over Fock states. However, large Fock states are hard to produce experimentally, and so we argue that a more natural choice, given the ease with which they can typically be produced in the laboratory, is coherent states. With this in mind, one could consider using the cost function
\begin{align}\label{eq:CScost}
    C_{\text{ACS}_{E}}(V,U)  := 1- \int_{\Vert \bm{\alpha}\Vert^{2}\le E} d\mu(\bm{\alpha})  \big\vert  \langle \bm{\alpha}\vert V^{\dagger}U\vert \bm{\alpha}\rangle \big\vert^{2} 
\end{align}
 where $d\mu(\bm{\alpha})$ is a normalized measure on the set of $m$-mode coherent states $\ket{\bm{\alpha}}$ with energy\footnote{Here, as elsewhere in this paper, we work in units where $\hbar \omega = 1$, where $\omega$ is the mode frequency.} less than $E$. Each of the coherent state overlaps in Eq.~\eqref{eq:CScost} can be computed  using local heterodyne measurements on the $m$ modes. That Eq.~\eqref{eq:CScost} is faithful can be seen from the fact that if it takes the value 0, the modulus of the $Q$-symbol of the unitary operator $V^{\dagger}U$ is equal to 1 almost everywhere on the domain, from which it follows that $V^{\dagger}U=e^{i\phi}$ due to the overcompleteness of coherent states \cite{perelomov}. 
 
 In practice, this cost, which we call the Averaged Coherent State (ACS) cost, can only be estimated by sampling $k$ coherent states with energy less that $E$, i.e. using
 \begin{align}
    C_{\text{ACS}_{E}}(V,U) \approx 1-\frac{1}{k}\sum_{j = 1}^k \big\vert  \langle \bm{\alpha}_{j}\vert V^{\dagger}U\vert \bm{\alpha}_{j}\rangle \big\vert^{2} \, 
    \label{eq:CScostprac}
\end{align}
where $\Vert\bm{\alpha}_j\Vert^2 < E$ for all $j$.
In Section \ref{sec:cvnfl} we will use an NFL theorem for Gaussian operations to argue that $k = 2m$ training states will suffice to learn any Gaussian unitary $U$. In Section~\ref{sec:gng} we provide numerics which suggest that to learn weakly non-Gaussian operations, in particular a small Kerr non-linearity, $k = 2m$ modes is also sufficient. However, in general to learn an arbitrary operation we expect that $k$ will scale with $E$.

Thus, in general, estimating $C_{\text{ACS}_{E}}$ will be more resource intensive than the TMSS costs, $C_{\text{\footnotesize LE-TMSS}_{r}}$ and $ C_{\text{\footnotesize R-TMSS}_{r}}$, in the sense that it requires a larger number of cost evaluations. However, $C_{\text{ACS}_{E}}$ does not require generating large highly-entangled Two-Mode-Squeezed states, and therefore may in some contexts be less experimentally demanding. In particular, we expect this cost to be most useful for learning (approximately) Gaussian operations where the number of training states required is reduced.

\medskip

\subsection{Trainability}\label{sec:Train}

For a variational quantum algorithm to run successfully, i.e. for it to be possible to minimize cost and thereby find the optimum solution, the cost landscape must have sufficiently large gradients to allow for training. Recently, it has been shown that discrete variable VQAs can exhibit so called `barren plateaus', where under certain conditions the gradient of
the cost function vanishes exponentially with the size of the system~\cite{mcclean2018barren,cerezo2021cost,uvarov2020barren,wang2020noise,cerezo2020impact,pesah2020absence,holmes2020barren,arrasmith2020effect,volkoff2021large,marrero2020entanglement,patti2020entanglement,holmes2021connecting,grant2019initialization}. Preliminary results further indicate continuous variable systems~\cite{cvbpl} may exhibit an analogous barren plateau phenomenon where the cost gradients vanish exponentially with the number of system modes. On such barren plateau landscapes (potentially untenably) precise measurements are required
to determine the direction of steepest descent and navigate to the minimum. Thus for any learning algorithm to be \textit{scalable} to large problem sizes it is essential to use a cost that does not exhibit a barren plateau.

Even from basic examples, one can see that the cost functions for CV quantum compiling exhibit a barren plateau. To demonstrate this we will focus on the Loschmidt Echo TMSS cost, but analogous arguments follow for the ricocheted TMSS cost and the averaged coherent state cost. Consider using the Loschmidt Echo TMSS cost to compile the $m$-mode identity operation using the ansatz composed of a product of phase gates, i.e. $V=e^{i\sum_{j=1}^{m}\phi_{j}a_{j}^{\dagger}a_{j}}$ where $\phi_{j}$ are uniform in $[-\pi,\pi]$. Then the cost takes the form
\begin{align}
C_{\text{LE-TMSS}_{r}}(\bm{\phi})&=1 - {1\over \cosh^{4m}r}\prod_{j=1}^{m}\left[ \vphantom{\left(\tanh^{4}r \right)^{-1}}\left( 1-2\cos \phi_{j}\tanh^{2}r \right.\right. \nonumber \\
&{} \left.\left. +\tanh^{4}r \right)^{-1}\right] \, .
\label{eqn:tmssbpl-cost}
\end{align}
It follows (examining $\phi_{1}$ without loss of generality) that 
\begin{align}
E(\vert \del_{\phi_{1}}C_{\text{LE-TMSS}_{r}}\vert)&={1\over (2\pi)^{m}}\int_{0}^{2\pi}d\vec{\phi}\vert \del_{\phi_{1}}C_{\text{LE-TMSS}_{r}}\vert \nonumber\\
&= \left( 2\over \pi(1+2\sinh^{2}r)^{2}\right)^{m}{\tanh^{2}r\over 1+\tanh^{4}r} \, 
\label{eqn:tmssbpl}
\end{align}
which vanishes exponentially with the number of modes $m$.   It therefore follows from Chebyshev’s inequality
\begin{equation}
    P(\vert \del_{\phi}C_{\text{LE-TMSS}_{r}}\vert >\epsilon)\le {E(\vert \del_{\phi}C_{\text{LE-TMSS}_{r}}\vert)\over \epsilon}
    \label{eqn:bplbpltmss}
\end{equation}
that the probability that the cost gradient deviates from zero vanishes exponentially. Thus the landscape exhibits a barren plateau~\cite{cvbpl}. If $r$ is allowed to vary with $m$, then taking sublinear scaling of the total squeezing, e.g., local squeezing $r(m)=O({\ln m^{\alpha}\over m})$, $\alpha >0$, causes (\ref{eqn:bplbpltmss}) to vanish only polynomially.

\begin{figure*}[t!]
    \includegraphics[width=0.95\textwidth]{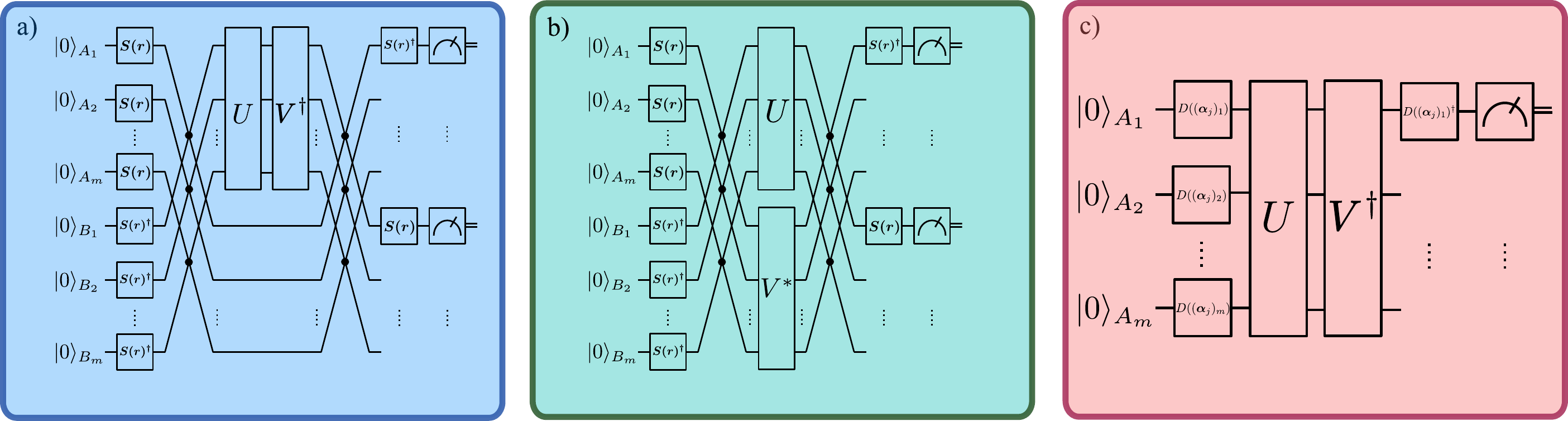}
    \caption{\textbf{Local cost functions.}  Here we show the circuit diagrams for the local versions of the a) Loschmidt Echo TMSS cost $C_{\rm LE-TMSS}^{(L)}$, b) Ricocheted TMSS cost $C_{\rm R-TMSS}^{(L)}$ and c) Averaged Coherent State cost $C_{\rm ACS}^{(L)}$. Crucially, in contrast to their respective global variants, only a pair of modes (in the case of $C_{\rm LE-TMSS}^{(L)}$ and $C_{\rm R-TMSS}^{(L)}$) and a single mode (in the case of $C_{\text{ACS}_{E}}^{(L)}$) is measured per circuit evaluation.} 
    \label{fig:localCostFuncs}
\end{figure*}

For fixed $r$, the barren plateau phenomenon can be circumvented by using a local variant of our proposed costs. Analogously to the local version of the Hilbert-Schmidt test introduced in Ref.~\cite{Khatri2019quantumassisted}, where pairs of qubits, rather than all $2n$ qubits, are measured to compute the local cost; our proposed local TMSS costs can be calculated from measurements on pairs of CV modes instead of $2m$ CV modes. Specifically, as shown in Fig. \ref{fig:localCostFuncs}(a), the local version of the Loschmidt Echo TMSS cost is defined as 
\begin{equation}
C_{\text{LE-TMSS}_{r}}^{(L)}(V,U) := 1 - \frac{1}{m} \sum_{j=1}^m {\rm Pr}(00)_{A_j B_j}
\label{defClocal}
\end{equation}
where $ {\rm Pr}(00)_{A_j B_j}$ is the probability of observing outcome $00$ on the pair of modes $A_j B_j$ from registers $A$ and $B$ in  Fig.~\ref{fig:CostFuncs}(d). This cost function can be shown to be faithful using the same probability theoretic argument used to prove the faithfulness of the local Hilbert Schmidt test cost in Ref.~\cite{Khatri2019quantumassisted}.

The local cost $C_{\text{LE-TMSS}_{r}}^{(L)}$ can be expressed as a sum of entanglement fidelities. To see how, first note that the local marginal states of $UV^{\dagger}\otimes \mathbb{I}_{B}\ket{\psi_{\text{TMSS}}^{m}(r)}$ on $A_{j}B_{j}$, i.e.  
\begin{align}
\rho_{A_j B_j} := \text{Tr}_{\overline{A_{j}}\,\overline{B_{j}}} \left[ (\mathcal{E}_{UV^\dagger} \otimes \mathbb{I}_{B})  \ket{\psi_{\text{TMSS}}^{m}(r)}\bra{\psi_{\text{TMSS}}^{m}(r)} \right] \nonumber  
\end{align}
where $\overline{A_{j}}$ ($\overline{B_{j}}$) is the complement of $A_{j}$ ($B_j$) in mode set $A$ ($B$) and $\mathcal{E}_{UV^\dagger}(...) = U V^\dagger (...) V U^\dagger $, 
can be written as 
\begin{equation}
\rho_{A_j B_j} = \text{Tr}_{\overline{A}_{j}} \left[ \mathcal{E}_{UV^\dagger} \otimes \mathbb{I}_{B_j} \left( \rho^{\text{\tiny TMSS}}_{A_jB_j}(r)\otimes \rho_{\beta(r)}^{\otimes m-1} \right) \right] \, .
\end{equation}
Here $\rho_{\beta(r)}$ is a thermal state at the inverse temperature $\beta(r):=-2\ln \tanh r$,
\begin{equation}
  \rho_{\beta(r)} := {1\over \cosh^{2}r}\ \sum_{n=0}^{\infty}\tanh^{2}r \ket{n}\bra{n}\, ,
\end{equation}
and $\rho^{\text{\tiny TMSS}}_{A_jB_j}(r) := \ket{\psi_{\text{TMSS}}^{m}(r)} \bra{\psi_{\text{TMSS}}^{m}(r)}_{A_jB_j}$.
It follows that $C_{\text{LE-TMSS}}^{(L)}$ can be written as 
\begin{equation}
C_{\text{LE-TMSS}_{r}}^{(L)}(V,U):=1-{1\over m}\sum_{j=1}^{m} F_j \, ,
\label{eqn:tmssloc}
\end{equation}
where $F_j$ is the entanglement fidelity of the channel 
\begin{equation}
\mathcal{E}_{j}(\rho_{A_{j}}):= \text{Tr}_{\overline{A_{j}}} UV^{\dagger}\left(\rho_{A_{j}}\otimes (\rho_{\beta(r)})^{\otimes m-1} \right) VU^{\dagger} \, , 
\end{equation}
with respect to the TMSS. That is,
\begin{equation}
    F_j := \text{Tr} \left[\rho^{\text{\tiny TMSS}}_{A_jB_j} (\mathcal{E}_{j}\otimes 1_{B_{j}}) \left( \rho^{\text{\tiny TMSS}}_{A_jB_j} \right) \right]   \, .
\end{equation}
Thus, not only is the local cost faithful, it also has a natural conceptual interpretation. 

Crucially, the local TMSS cost (\ref{eqn:tmssloc}) appears not to exhibit a barren plateau. For example, for the problem of compiling the identity with multimode phase shifters considered at the beginning of this section, one obtains
 \begin{align}\label{eq:localgrad}
     E(\vert \del_{\phi_{1}}C^{(L)}_{\text{LE-TMSS}_{r}}\vert)&= {2\over \pi m \cosh^{4}r(1+\tanh^{2}r)^{2}}\nonumber \\
     &{} \times \left( m-1+{\tanh^{2}r\over 1+\tanh^{4}r }\right)
 \end{align}
 which, for fixed $r$, is constant as $m\rightarrow \infty$.
 
\medskip
 
We note that, for a fixed number of modes $m$, the costs $C_{\text{LE-TMSS}_{r}}$ and $C_{\text{LE-TMSS}_{r}}^{(L)}$ concentrate to 1 when the squeezing parameter $r$ is large. It follows that the gradients of $C_{\text{LE-TMSS}_{r}}$ and $C_{\text{LE-TMSS}_{r}}^{(L)}$, as seen from Eq.~\eqref{eqn:tmssbpl} and Eq.~\eqref{eq:localgrad}, vanish exponentially with $r$. 
Consequently, training becomes exponentially more resource intensive for larger $r$. A similar exponential vanishing with respect to $r$ was observed for approximations of CV energy-constrained channel fidelities that compare the actions of CV channels on two-mode squeezed states \cite{Sharma2020Opt}. This vanishing gradient problem is conceptually different to the barren plateau phenomenon which may be resolved using a local cost. 
In Section \ref{sec:gng}, we propose a practical resolution for this vanishing gradient problem.

\medskip

 Finally, we note that for the example of compiling the identity operation considered earlier, the averaged coherent state cost function $C_{\text{ACS}_{E}}$ in (\ref{eq:CScost}), and its approximation in (\ref{eq:CScostprac}), do not exhibit barren plateaus if the energy bound $E$ is taken to depend on the mode number $m$ in such a way that the maximal energy per mode  $E(m)/m$ grows sublinearly as a function of $m$ (see Section 2 of Ref.~\cite{cvbpl}). However, we expect that more general compiling problems, such as Gaussian compiling or compiling of Kerr non-linearities, will exhibit barren plateaus. Such trainability issues could again be mitigated by defining a local version of $C_{\text{ACS}_{E}}$. A natural choice in local cost would be (analogously to \eqref{defClocal}) to compute a spatial average of the probability of measuring the vacuum state on each of the modes at end of the circuit in Fig.~\ref{fig:CostFuncs}(f). More concretely, one could use
 \begin{align}
    C_{\text{ACS}_{E}}^{(L)}(V,U)&:= 1- {1\over k}\sum_{j=1}^{k}\Tr[ O_{\rm Local}^{(j)} V^\dagger U\vert \bm{\alpha}_{j}\rangle \langle \bm{\alpha}_{j}\vert U^\dagger V] \\
    &O_{\text{Local}}^{(j)} = {1\over m}\sum_{\ell=1}^{m} |(\bm{\alpha}_{j})_{\ell}\rangle \langle (\bm{\alpha}_{j})_{\ell} \vert_{A_{\ell}} \otimes \mathbb{I}_{\overline{A_{\ell}}} \nonumber \\
    \text{ subject to }&{} \Vert \bm{\alpha}_{j}\Vert^{2} \le E \; \forall \; j \nonumber
\end{align}
where $\ket{\bm{\alpha}_{j}}=\bigotimes_{\ell=1}^{m}\ket{(\bm{\alpha}_{j})_{\ell}}_{A_{\ell}}$ is a coherent state in the $2m$-dimensional phase space, and we have used the discrete version of $C_{\text{ACS}_{E}}(V,U)$ in (\ref{eq:CScostprac}). Computation of one term in the double sum defining $C_{\text{ACS}_{E}}^{(L)}$ is shown in Fig. \ref{fig:localCostFuncs}(c).

\section{Numerical Implementations}\label{sec:gng}

Here we present results for implementing CV quantum compilation to learn commonly encountered CV operations. In particular, we focused on learning arbitrary single mode Gaussian operations, Kerr non-linearities and a general beamsplitter operation.
In each case, we performed continuous parameter optimization in order to minimize the TMSS cost function Eq.~\eqref{eqn:costtmss2}. We focus on the Loschmidt-Echo variant of the cost but similar results are obtained for the Ricocheted variant. We note that it is unnecessary to use the local version of the cost here since for these proof-of-principle implementations we consider learning single and two mode unitaries for which the cost gradients are expected to be manageable even with a global cost.

Given the close connections between the TMSS cost and the HST cost for large $r$, and the operational meaning of $C_{\text{HST}}$ as a measure of the average fidelity between $U$ and $V$, ideally we would use a large $r$ value, i.e. large squeezing, to learn $U$. 
However, as discussed in Section~\ref{sec:Train}, and as demonstrated numerically in Fig.~\ref{fig:Landscape}, the landscape of the TMSS cost becomes overwhelmingly flat for large $r$, making it difficult to train.
We therefore found it more effective to train initially using a small $r$ value. Then once reasonably accurate pre-trained parameters have been obtained using a small $r$, we trained on a larger $r$ to refine the quality of the solution. 

\begin{figure}
    \centering
    \includegraphics[width=0.49\textwidth]{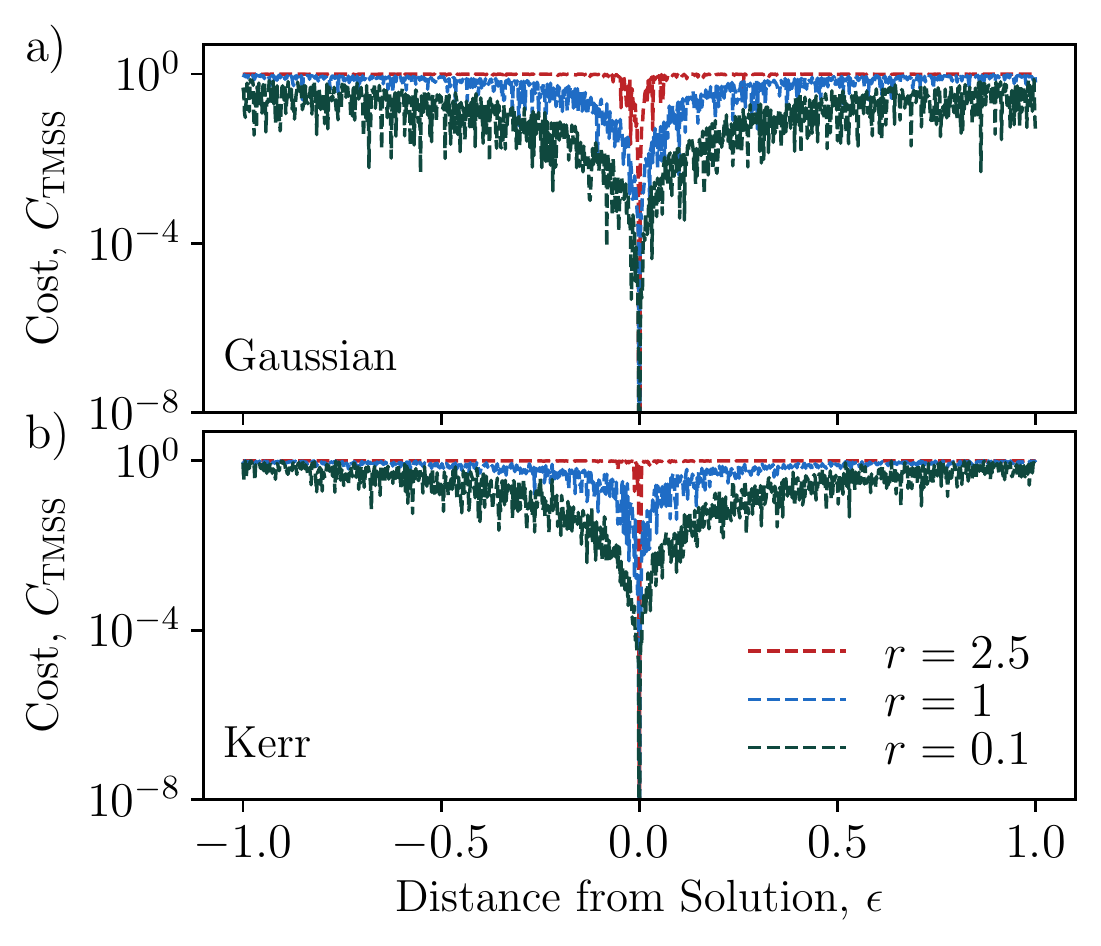}
    \caption{\textbf{Cost Landscapes.} The cost landscape for learning a) an arbitrary Gaussian operation and b) a $\chi = 3$ Kerr non-linearity using 4 layers of the general ansatz defined in Eq.~\eqref{eq:SingleLayerAns}. Here $\epsilon$ is a noise parameter that determines the deviation of the ansatz parameters, $\vec{\theta}$, from the optimum parameters, $\vec{\theta}^{\rm opt}$. Specifically, we set $\theta_k = \theta_k^{\rm opt} + \epsilon R$, where $R$ is a random number between -1 and 1.}
    \label{fig:Landscape}
\end{figure}

\begin{figure*}
    \centering
    \includegraphics[width=0.95\textwidth]{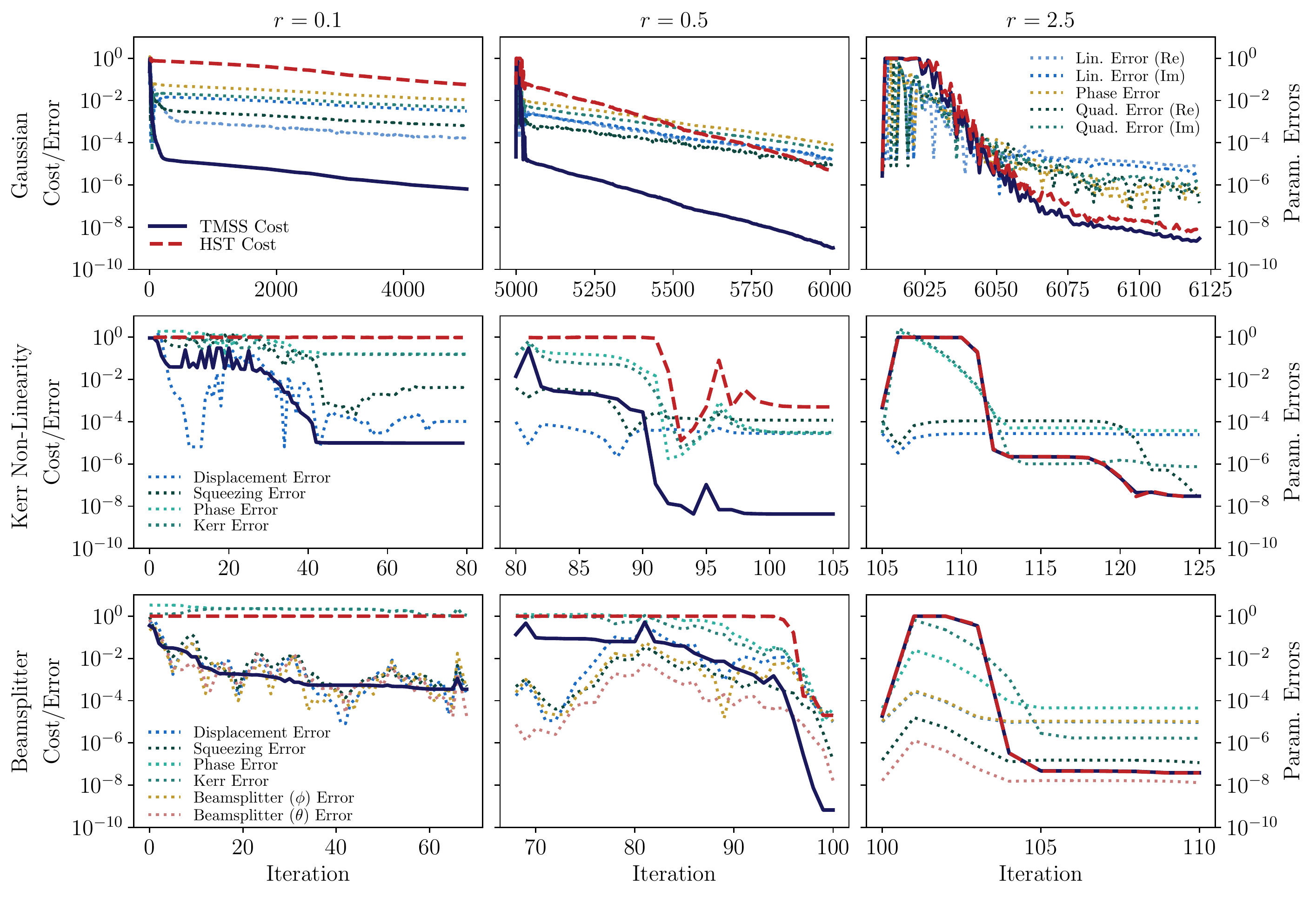}
    \caption{\textbf{Learning using TMSS Cost.} The $C_{{\rm LE-TMSS}_r}$ cost (blue) as a function of iteration for learning a Gaussian (top row), Kerr non-linearity (middle row) and Beamsplitter (bottom row). To mitigate the problem of vanishing cost gradients for large $r$ we take a perturbative approach starting with $r=0.1$ (left column), and after convergence increasing $r$ to $0.5$ (middle column) and then $2.5$ (right column). To quantify the quality of the optimization, we take the optimal parameters obtained at each iteration of the optimisation algorithm and plot both the Hilbert-Schmidt cost $C_{\rm HST}$ (red dashed) and the errors in the individual optimised parameters (dotted). The parameter errors are given in natural units with $\hbar = 1$ and the mass and frequency of the modes equal to 1.}
    \label{fig:TMSSlearning}
\end{figure*}

\medskip
\paragraph*{Gaussian Operations.}
An arbitrary single-mode Gaussian operation
\begin{equation}\label{eq:GausOp}
   U_{\rm Gaus}(\alpha, \beta, \phi) := e^{- i H_{\rm Gaus}(\alpha, \beta, \phi) }
\end{equation}
is generated by the quadratic Hamiltonian
\begin{equation}
    H_{\rm Gaus}(\alpha, \beta, \phi) := \alpha a + \alpha^* a^\dagger + \beta a^2 + \beta^* {a^\dagger}^2 + \phi a^\dagger a
\end{equation}
where $\alpha$ and $\beta$ are arbitrary complex numbers and $\phi$ is an arbitrary real number. We generated a random target Gaussian operation $U_{\rm targ} := U_{\rm Gaus}(\alpha_{\rm targ}, \beta_{\rm targ}, \phi_{\rm targ}) $ by choosing $\Re(\alpha_{\rm targ})$, $\Im(\alpha_{\rm targ})$, $\Re(\beta_{\rm targ})$ and $\Im(\beta_{\rm targ})$ randomly in the range $[0,1]$ and $\phi_{\rm targ}$ in the range $[0, 2\pi]$. We then used $C_{\rm LE-TMSS}$ to learn $U_{\rm targ}$ using an ansatz of the same form. That is, using an ansatz of the form $V_{\rm anz} = U_{\rm Gaus}(\alpha, \beta, \phi)$ where $\alpha$, $\beta$ and $\phi$ are parameters to be variationally learnt. Since $U_{\rm Gaus}$ is readily factorizable into the product of displacement, squeezing and phase operations, this ansatz can be straightforwardly implemented using standard gates on a CV-quantum computer. 

The results of learning\footnote{ To numerically compute the cost functions required for our simulations we worked in a truncated Hilbert space. Specifically, we truncated the Hilbert space to the 50 lowest lying Fock states.} an arbitrary Gaussian operation are shown in the top row of Fig.~\ref{fig:TMSSlearning}. To quantify the quality of the optimization, we take the optimal parameters obtained at each iteration of the optimisation algorithm and plot both the Hilbert-Schmidt cost, $C_{\rm HST}$, and the errors in the individual optimised parameters. As shown in Fig.~\ref{fig:TMSSlearning}(a), we start with $r = 0.1$ and successfully optimize the TMSS cost (using the COBYLA algorithm) down to $10^{-6}$. This corresponds to errors in the HST cost and individual parameters in the region of $10^{-1}$ to $10^{-4}$. We then took the optimal parameters from minimizing the TMSS cost with $r = 0.1$ and optimized using the HST cost with $r = 0.5$. The cost value and parameter errors initially go sharply up (because the old parameters that optimised the cost with $r=0.1$ are no longer optimal) before decreasing again as the new cost is optimised. After optimising with $r= 2.5$ we get both the TMSS and HST costs down to $10^{-9}$ with errors in the individual parameters in the region of $10^{-5}$. Thus Fig.~\ref{fig:TMSSlearning} both demonstrates the effectiveness of our perturbative strategy and highlights how the difference between $C_{\rm TMSS}$ and $C_{\rm HST}$ decreases with increasing $r$. 

\medskip
\paragraph*{Kerr Non-Linearity.} The second optimization task we consider is learning a Kerr non-linearity of the form
\begin{equation}
    U_{\rm Kerr}(\chi) := e^{-i \chi (a^\dagger a)^2} \, .
\end{equation}
Since there is no simple ansatz which can capture an arbitrary non-Gaussian operation, in this case we use the general layered ansatz advocated in Refs~\cite{arrazolaml, killoran2019continuous}. This ansatz is composed of multiple layers that each consist of a displacement, squeeze, phase shift and non-linear Kerr shift. That is, a single layer is of the form
\begin{equation}\label{eq:SingleLayerAns}
    V_{\rm layer}(\alpha, \beta, \phi, \chi) := U_{\rm Kerr}(\chi) R(\phi) D(\alpha) S(\beta) 
\end{equation}
and the total ansatz is composed of a product of $L$ such layers, $V_{\rm ans}(\vec{\theta}) :=  \prod_{l = 1}^{L} V_{\rm layer}(\alpha_l, \beta_l, \phi_l, \chi_l) $ where $\vec{\theta} := \{\alpha_l, \beta_l, \phi_l, \chi_l\}_{l=1}^{L}$. 
Since the gates in every layer constitute a universal set~\cite{killoran2019continuous}, this ansatz can be used to implement any single-mode quantum operation.

We focus on the task of learning a large Kerr non-linearity (of, perhaps, some new, yet to be classified, material). To make this task both non-trivial and physically pertinent we suppose that the Kerr non-linear components used as part of the ansatz are limited to implementing some maximum non-linearity which is less than that of the target non-linearity. Specifically, we suppose that the components of $\vec{\theta}$ are bounded between 0 and 1, and we consider trying to learn  $\chi_{\rm targ} = 3$ using a 4-layered ansatz. 
To perform the optimization we employ the gradient-based Limited-memory BFGS algorithm.

To assess the performance of the optimization in Fig.~\ref{fig:TMSSlearning} we again plot the HST cost as well as a measure of the error in the individual parameters. Given the non-commutativity of the displacement, squeezing, phase shift and Kerr operations, there are multiple possible choices in the parameters $\vec{\theta}$ such that $V(\vec{\theta}) = U_{\rm Kerr}(\chi_{\rm targ})$. Despite this, in practice, we found that the optimization algorithm found the `obvious' solution where the displacement parameters $\alpha_l$, squeezing parameters $\beta_l$, and phase shift parameters $\phi_l$ each sum to zero and the Kerr non-linearity parameters $\chi_l$ summed to $\chi_{\rm targ}$. We therefore took the difference between these values (i.e. $| 
\sum_{l=1} \alpha_l  - 0 |$, $| 
\sum_{l=1} \beta_l  - 0 |$, $| 
\sum_{l=1} \phi_l  - 0 |$ and $| 
\sum_{l=1} \chi_l  - \chi_{\rm targ} |$) as the measure of our displacement, squeezing, phase and Kerr errors respectively.

Similarly to the Gaussian case we find that starting with a small $r$ allows for successful training. Then increasing the value of $r$ improves the quality of the training in the sense that the HST error and parameter errors can be  further decreased. We achieve a final TMSS and HST cost of $10^{-8}$ and parameters errors of $\sim 10^{-5}$. 

\medskip

\paragraph*{Beamsplitter Operation.} Finally, we attempted to learn a beamsplitter operation of the form 
\begin{equation}
    U_{\rm BS}(\theta, \phi) = e^{\theta (a b^\dagger e^{i \phi} - a^\dagger b e^{-i \phi} )} 
\end{equation}
for a two mode system with annihilation operators $a$ and $b$ respectively where $\theta$ and $\phi$ are randomly chosen phases in the range $[0, 2\pi]$. To learn this operation we used a single layer ansatz of the form
\begin{equation}
\begin{aligned}
    V_{\rm layer}^{(12)}(\vec{\theta}) := &U_{\rm BS}(\theta, \phi) V_{\rm layer}^{(1)}(\alpha_1, \beta_1, \phi_1, \chi_1) \\ &\times V_{\rm layer}^{(2)}(\alpha_2, \beta_2, \phi_2, \chi_2) \, 
\end{aligned}
\end{equation}
where $ V_{\rm layer}^{(1)}$ and $ V_{\rm layer}^{(2)}$ indicate the single-mode gate sequence defined in Eq~\eqref{eq:SingleLayerAns} on the first- and second-mode respectively. 
The optimization was successful, with the TMSS and HST costs reduced to $\sim 10^{-7}$. 

We note that while it may superficially appear from Fig.~\ref{fig:TMSSlearning} that fewer iterations are required to learn the Kerr non-linearity and beamsplitter operation than an arbitrary Gaussian operation, this is a feature of our choice in optimisation algorithm. Namely, BFGS uses a gradient based approach which involves evaluating the cost $n_{\rm param}$ times, where $n_{\rm param}$ is the number of parameters that need to be learnt at every iteration step. Once this is accounted it requires more cost evaluations to learn the general beamsplitter or a Kerr non-linearity than to learn a Gaussian operation. This is precisely as one would expect since these are more complex optimization problems.

\section{No-Free-Lunch theorems for CV quantum learning\label{sec:cvnfl}}

In classical machine learning, the No-Free-Lunch (NFL) theorems consider the task of learning a target function $f$, where $f$ maps a discrete input set $X$ to a discrete output set $Y$ (both of size $d$). The learning is performed using a training set $S$ consisting of $|S|$ input-output training pairs, 
\begin{equation}
    S=\{(x_j,f(x_j): \,  x_j\in X  \}_{j=1}^{|S|} \, .
\end{equation}
In the limit of perfect learning, one assumes it is possible to train a hypothesis function $h_S$ to match the target function $f$ on all training pairs in $S$. The No-Free-Lunch theorems then quantify the `generalization error', i.e. how well the hypothesis function matches the target on unseen data. In general terms, the theorems demonstrate that the generalization error of a given learning algorithm is not less than that of a random learning algorithm in expectation over target functions $f$~\cite{wolp,wolp2,wolp3,wolp4, wolfnotess}. That is, the average performance of a learning algorithm is determined not by the choice in learning algorithm but rather by the amount of training data $|S|$. 

Specifically, the generalization error can be quantified by the following \textit{risk} function 
\begin{equation}\label{eq:risk}
    R_f(h_S)=\sum_{x\in X} \pi(x)\mathbf{1}\Big[f(x)\neq h_{S}(x)\Big] ,
\end{equation}
where $\mathbf{1}[S]$ is the indicator function taking value 1 (0) if condition $S$ is satisfied (not satisfied). This is the probability that the hypothesis function $h_{S}(x)$ and target function $f(x)$ differ across $X$, the domain of $f$, when $x$ is sampled from the uniform probability distribution $\pi(x)$. The average risk, averaged over training sets $\SC$ and functions $f$, can for any optimization method be lower bounded as~\cite{wolfnotess}
\begin{equation}\label{eg:classicalNFL}
    \mathbb{E}_f[\mathbb{E}_S[R_f(h_S)]]\geq \left(1-\frac{1}{d}\right)\left(1-\frac{|S|}{d}\right)\,.
\end{equation}
Hence the average risk is determined by the number of training pairs $|S|$, vanishing if and only if $S$ spans the full domain of $f$, i.e. if $|S| = d$.  

\medskip

Similar NFL theorems exist for finite-dimensional quantum circuit learning, in which a target function $f$ corresponds to a unitary quantum channel $U$ and the training set is generalized to a set of quantum state pairs $S=\lbrace \ket{\psi_{j}},U\ket{\psi_{j}}\rbrace_{j=1}^{|S|}$. By defining the generalization error using a suitable distance on quantum state space, it is shown that in general an exponential number of training states, $|S| \sim 2^n$, are required to learn an $n$ qubit unitary~\cite{nopree1}. 

Further, by allowing the training set to consist of pairs of states that are entangled with a reference system, i.e. $S=\lbrace \ket{\psi_{j}},U\otimes\mathbb{I}_{\mathcal{R}}\ket{\psi_{j}} \rbrace_{j=1}^{|S|}$, where $\ket{\psi_{j}}\in \mathcal{H}\otimes \mathcal{H}_{\mathcal{R}}$ are entangled pure states of Schmidt rank $\mathfrak{r}$, an entanglement-enhanced quantum No-Free-Lunch theorem can be derived. In this case, the lower bound of the expected error of a quantum learning algorithm, over all target unitaries $U$ is reduced linearly in $\mathfrak{r}$~\cite{nopree}. This has the important practical implication that, by using entanglement as a resource, the number of unique input-output state pairs needed to learn a target unitary, $U$, may be exponentially reduced  in the limit of perfect learning.

\medskip

In Section~\ref{sec:loug} we derive NFL theorems in a restricted setting where the task is learning a linear optical unitary operation. Specifically, Theorem~\ref{th:ff1} and Theorem~\ref{th:ff3} quantify learning with classical training data (coherent states) and quantum training data (entangled coherent-Fock states) respectively. 
Section \ref{sec:tmssnfl} shows how the entanglement-assisted NFL theorem of \cite{nopree} can be applied in an unrestricted CV learning setting. We further discuss how CV quantum NFL theorems can be used to motivate cost functions for CV quantum compiling. These results are summarized in Table~\ref{tab:NFL-entanglement}.

\subsection{Learning linear optical unitaries from Gaussian training data\label{sec:loug}}

Linear optical unitaries capture the dynamics of multimode beamsplitters, phase shifters and displacement operators. Such unitaries, on $m$ CV modes, can be written in the form $U=e^{iH}$ where $H=H^{\dagger}$ and $[H,\sum_{j=1}^{m}a_{j}^{\dagger}a_{j}]=0$. Here we consider the task of training a hypothesis unitary $V_S$ to emulate a target linear optical unitary $U$ using a set of training data $S$ composed of $m$-mode coherent states. We analyze the expected performance of a generic learning algorithm, over all target linear optical unitaries $U$ and all training sets containing $|S|$ training states.

To fix the notation, an $m$-mode coherent state with mean vector $w$ is written $\ket{w}$. Here $w$ is a row vector in $\mathbb{R}^{2m}$ given by $w=\langle R\rangle$ with $R=(q_{1},p_{1},\ldots,q_{m},p_{m})$ the vector of canonical quadrature operators. 
The action of the target linear optical unitary $U$ on $\ket{w}$ is given by $U\ket{w}=\ket{wO}$ where $O$ is a $2m\times 2m$ orthogonal matrix. The set of $2m$ by $2m$ orthogonal matrices will be denoted $\text{Orth}(2m)$.
Equipped with this notation, the training set to learn a linear optical unitary $U$ using $|S|$ pairs of $m$-mode coherent states can be written as 
\begin{equation}
    S=\lbrace (w_{j},w_{j}O) \rbrace_{j=1}^{\vert S\vert} \in (\mathbb{R}^{2m}\times \mathbb{R}^{2m})^{\times \vert S\vert} \, .
    \label{eqn:coh_trainset}
\end{equation}

Similarly, the action of the hypothesis linear optical unitary $V_S$ on $\ket{w}$ can be written as $V_S\ket{w}=\ket{wT_S}$ where $T_{S}\in\text{Orth}(2m)$.
We focus on the limit of perfect learning and assume that the learning algorithm outputs an orthogonal matrix $T_{S}$ that agrees perfectly with $O$ on all coherent states $w_{j}$ in the training set. That is, we assume that 
$w_{j}T_S=w_{j}O$ for the training data mean vectors $(w_{j}, w_{j} O) \in S$. 

To quantify how well the hypothesis unitary $V_S$ matches the target unitary $U$ on all possible coherent states, i.e. not just the training states, we define a risk function. To do so we utilize a simple loss function of the form  $L(y, z)= \Vert y - z \Vert^{2}$ where $y = x O$ and $z = x T_S$ are the output vectors of the target and hypothesis orthogonal matrices respectively. Throughout this section, $\Vert\cdot\Vert$ refers to the 2-norm on the Euclidean space $\mathbb{R}^{2m}$. The total risk is then defined as the average loss over a multivariate Gaussian distribution of input vectors $x$, i.e. over the distribution
$\pi(x)= {1\over (2\pi\sigma)^{m}}e^{-{\Vert x \Vert^2\over 2\sigma}}$.
The total risk thus takes the form 
\begin{equation}
R_{O}(T_{S})={1\over 8m\sigma}\int d^{2m}x \, \pi(x) L(xT_{S},xO)  \, ,
\label{eqn:rikrik}
\end{equation}
where the normalization factors have been chosen to ensure 
$R_{O}(T_{S})$ takes values between 0 and 1.  
In essence, $R_{O}(T_{S})$ is a measure of how well $T_{S}$ matches $O$ isotropically in phase space. 
Risk values of $R_{O}(T_{S})=0$ and $R_{O}(T_{S})=1$ are both totally informative, corresponding to $T_{S}=O$ and $T_{S}=-O$ respectively, i.e. perfect learning (up to a possible sign error). In contrast, a risk value of $R_{O}(T_{S})=1/2$ implies that the hypothesis unitary matches the target no better than a typical random linear optical unitary. 

The following theorem quantifies the expected risk for learning a linear optical unitary using the training set $S$, \eqref{eqn:coh_trainset}, in the limit of perfect learning. 

\begin{theorem} Let $O$ be distributed according to the normalized Haar measure on $\text{\normalfont{Orth}}(2m)$ and let the training data $S$ of cardinality $\vert S\vert$ be chosen uniformly from a compact connected set of $m$-mode coherent states, as defined in Eq.~\eqref{eqn:coh_trainset}. Then
\begin{equation}
 E_{S}(E_{O}(R_{O}(T_{S})))={1\over 2}-{\vert S\vert \over 4m}
 \label{eqn:th1eq}
 \end{equation}
 \label{th:ff1}
 \end{theorem}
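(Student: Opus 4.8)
The plan is to first reduce the risk to a simple function of $\Tr(T_S O\trp)$, and then to exploit the perfect-learning constraint together with the Haar invariance of $O$. First I would carry out the Gaussian integral in \eqref{eqn:rikrik}. Writing $L(xT_S,xO)=\Vert x(T_S-O)\Vert^2 = x(T_S-O)(T_S-O)\trp x\trp$ and using that the isotropic distribution $\pi$ has second moments $\int d^{2m}x\,\pi(x)\,x_i x_j=\sigma\,\delta_{ij}$, the integral collapses to $\sigma\,\Tr[(T_S-O)(T_S-O)\trp]=\sigma\Vert T_S-O\Vert_F^2$. Since $T_S$ and $O$ are orthogonal, $\Tr(T_ST_S\trp)=\Tr(OO\trp)=2m$ and the two cross terms coincide (because $\Tr(OT_S\trp)=\Tr(T_SO\trp)$), so $\Vert T_S-O\Vert_F^2 = 4m-2\Tr(T_SO\trp)$ and hence
\begin{equation}
R_O(T_S)=\frac12-\frac{1}{4m}\Tr(T_SO\trp)\,.
\end{equation}
This already reproduces the claimed extremal values: $T_S=O$ gives $R=0$, $T_S=-O$ gives $R=1$, and an $O$-independent $T_S$ gives $R=1/2$ in expectation. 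It then remains to show that $E_S E_O[\Tr(T_SO\trp)]=|S|$.

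Next I would split the trace using a basis adapted to the training data. Let $W=\mathrm{span}\{w_j\}$, which has dimension $|S|$ almost surely when the $w_j$ are sampled from a set not contained in a proper subspace, and pick an orthonormal basis $\{e_a\}_{a=1}^{2m}$ of $\mathbb{R}^{2m}$ whose first $|S|$ vectors span $W$. Since $\Tr(T_SO\trp)=\sum_a (e_aT_S)\cdot(e_aO)$ is independent of the choice of orthonormal basis, and since perfect learning forces $e_aT_S=e_aO$ whenever $e_a\in W$, the terms with $a\le|S|$ each contribute $\Vert e_aO\Vert^2=1$. Thus
\begin{equation}
\Tr(T_SO\trp)=|S|+\sum_{a>|S|}(e_aT_S)\cdot(e_aO)\,.
\end{equation}

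Finally I would argue that the remaining cross term averages to zero. The crucial observation is that the training set $S=\{(w_j,w_jO)\}$, and therefore the learned $T_S$, depends on $O$ only through its restriction $O|_W$; the action of $O$ on $W^\perp$ is entirely unseen by the algorithm. Conditioning on $O|_W$ and averaging the remaining Haar freedom, $O|_{W^\perp}$ is a uniformly random isometry from $W^\perp$ onto $(OW)^\perp$. Composing this isometry with $-\id$ on its target space is a measure-preserving involution of the conditional Haar measure (available because $\mathrm{Orth}(2m)$ is the full orthogonal group), and it sends each $(e_aT_S)\cdot(e_aO)\mapsto -(e_aT_S)\cdot(e_aO)$ while leaving $T_S$ fixed; hence every such term has vanishing conditional expectation. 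Averaging over $O|_W$ and then over $S$ leaves $E_SE_O[\Tr(T_SO\trp)]=|S|$, which yields \eqref{eqn:th1eq}.

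The main obstacle is this last step: one must verify carefully that $T_S$ is a function of $O|_W$ alone, so that it may be held fixed while the complementary block $O|_{W^\perp}$ is averaged, and that the symmetrization genuinely preserves the conditional distribution of $O$. This independence is precisely what makes the result algorithm-agnostic, i.e. a genuine No-Free-Lunch statement. A secondary technical point is the genericity assumption $\dim W=|S|$, which should be justified from uniform sampling on a compact connected set of coherent states, and which also clarifies why the formula is meaningful only for $|S|\le 2m$, saturating at $R=0$ once the training data span all of phase space.
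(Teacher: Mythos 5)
Your proposal is correct and follows essentially the same route as the paper's proof: reduce the risk to $\tfrac12-\Tr(T_SO^{T})/(4m)$ via the Gaussian second moments, use the perfect-learning constraint to peel off an identity block of dimension $|S|$ on the span of the training vectors, and show the complementary block contributes zero in expectation. Your conditional sign-flip involution on $O|_{W^\perp}$ is in fact a slightly more careful justification of the step the paper states simply as $\int dY\,\Tr Y=0$ for $Y\in\mathrm{Orth}(2m-|S|)$.
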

 \begin{proof}
 For fixed $S$, simplify (\ref{eqn:rikrik}) to
 \begin{align}
 R_{O}(T_{S})&={1\over 2} -{1\over 4m\sigma}\int d^{2m}x \, xT_{S}O^{T}x^{T}\pi(x)\nonumber \\
 &= {1\over 2} -{\text{Tr}T_{S}O^{T}\over 4m}
 \label{eqn:rott}
 \end{align}
 
Under the assumption that the learning algorithm outputs $T_{S}$ that agrees with $O$ on the $\vert S\vert$-dimensional subspace of $\mathbb{R}^{2m}$ spanned by the training data (i.e., $w_{j}T=w_{j}O$ for the training data  mean vectors $w_{j}$), we can write 
\begin{equation}
T_{S}O^{T}=\begin{pmatrix}
\mathbb{I}&0\\0&Y
\end{pmatrix}
\label{eqn:yyy}
\end{equation}
where $Y\in \text{Orth}(2m-\vert S\vert)$. Taking the expectation over $O$ gives
\begin{align}
E_{O}(R_{O}(T_{S}))&=\int dO \left[ {1\over 2} -{\text{Tr}T_{S}O^{T}\over 4m} \right]\nonumber \\
&= {1\over 2}-{\vert S\vert \over 4m}-\int dY {\text{Tr}Y\over 4m} \nonumber \\
&= {1\over 2}-{\vert S\vert \over 4m} 
\label{eqn:csnfl}
\end{align}
Because $S$ is taken from a subset of $\mathbb{R}^{2m}$ with no isolated points, one always obtains a set of $\vert S\vert$ linearly-independent coherent states when $S$ is sampled.  Therefore, taking the expectation over $S$ does not change the right-hand side of (\ref{eqn:csnfl}).
 \end{proof}

Theorem~\ref{th:ff1} shows that the generalization error for learning generic linear optic unitaries reduces linearly with the number of pairs of coherent states trained on, vanishing completely for $|S| = 2 m$. 
(We stress that $|S|$ is the number of unique training pairs required to learn the unitary, not the total number, which, due to shot noise and the iterative optimization procedure, will be substantially larger.)
This implies that the Averaged Coherent State cost in Eq.~\eqref{eq:CScostprac} can be approximated using only $2m$ training states when learning $m$-mode linear optical unitaries. More broadly, Theorem~\ref{th:ff1} can be viewed as a ``classical'' NFL theorem for CV systems.

\begin{table*}[ht]
\caption{No-Free-Lunch theorems for CV unitary learning}
\centering
\begin{tabular}{p{0.15\linewidth}p{0.22\linewidth}p{0.2\linewidth}p{0.2\linewidth}p{0.19\linewidth}}
\hline
 Target unitary& Training set $S$ & Entangled training& NFL & Cost Motivated\\
\hline
Linear optical & coherent states & No & ${1\over 2}-{\vert S\vert \over 4m}$ & $C_{\rm ACS_E}$ \eqref{eq:CScostprac} \\
Linear optical & coherent-Fock states & Yes & ${1\over 2}-{\mathfrak{r}\vert S\vert \over 4m}$ & $C_{\rm ECFS}$ \eqref{eq:ECFS} \\
Fock truncated  & Schmidt rank $\mathfrak{r}$ TMSS & Yes & $1-{\mathfrak{r}^{2}\vert S\vert^{2}+d+1 \over d(d+1)}$ & $C_{\rm LE-TMSS}$ \eqref{eqn:costtmss}\\
\hline
\end{tabular}
\label{tab:NFL-entanglement}
\end{table*}
 
\medskip

It is possible formulate a quasiclassical CV NFL theorem in which the training data consists of squeezed, rather than coherent, states. In this case we use a risk function that compares the action of $O$ and $T_{S}$ on the phase space fluctuations of a compact set of centered, pure CV Gaussian states. We find that squeezing in general inhibits the learning process. However, intriguingly, for this definition of the risk, the risk may be reduced by the training set size as a function of $\vert S\vert^{2}$ instead of $\vert S\vert$. This CV NFL theorem is discussed and proved in Appendix \ref{app:bb}.

\medskip

We now show, similarly to the entanglement-assisted discrete variable NFL theorem \cite{nopree}, that utilizing entangled training states can lower the expected risk. This improvement is achieved by modifying the  training data set in Theorem \ref{th:ff1}, while keeping the risk (\ref{eqn:rikrik}) the same. 
Specifically, we now consider a training set 
\begin{equation}\label{eq:SEntangled}
    S = \lbrace (\ket{\psi_{j}^\mathfrak{r}}, U\otimes \mathbb{I}_{\mathcal{R}}\ket{\psi_{j}^\mathfrak{r}} )  \rbrace_{j=1}^{\vert S\vert}\subset (\mathcal{H}_{\mathcal{X}}\otimes \mathcal{H}_{\mathcal{R}})^{\times 2\vert S\vert} \, 
\end{equation}
composed of $|S|$ pairs of $m$-mode \textit{entangled coherent-Fock states} of the form
\begin{equation}
\ket{\psi_{j}^\mathfrak{r}}:={1\over \sqrt{\mathfrak{r}}}\sum_{k=1}^{\mathfrak{r}}\ket{w^{(j)}_{k}}_{\mathcal{X}} \otimes \ket{k}_{\mathcal{R}} \, .
\label{eqn:tdata}
\end{equation}
Here $\{ \ket{w_{k}^{(j)}}_{
\mathcal{X}} \}_{k=1}^{\mathfrak{r}}$ is a set of linearly independent coherent states acting on a system $\mathcal{X}$ and $\ket{k}_\mathcal{R}$ denotes the $k_{\rm th}$ Fock state of an ancilla register $\mathcal{R}$. 
The positive integer $\mathfrak{r}$ acts as an analogue of Schmidt rank in this  context, 
although we note that the linearly independent mean vectors $w_{k}^{(j)}$ need not be approximately orthogonal, so  $\mathfrak{r}$ is not strictly related to the entanglement entropy. To use a precise term, $\mathfrak{r}$ is equal to the exponential of the entropy of coherence \cite{winteryang} with respect to the orthonormal set $\lbrace \ket{w_{k}^{(j)}}\otimes \ket{k} \rbrace_{k=1}^{\mathfrak{r}}$ for any $j$. If $\Vert w_{k}^{(j)}\Vert$ is sufficiently large, $\ket{\psi_{j}}$ has  entanglement entropy approximately equal to $\log_{2} \mathfrak{r}$ with respect to the partition consisting of $m$ CV modes $\mathcal{X}$ and the CV register $\mathcal{R}$   of the training set.

Analogously to the NFL for coherent state training above, we derive the following theorem on the expected risk.

\begin{theorem}
Let $O$ be distributed according to the normalized Haar measure on $\text{\normalfont{Orth}}(2m)$ and let the training data $S$ of cardinality $\vert S\vert$ consist of pairs of entangled coherent-Fock states as defined in Eq.~\eqref{eq:SEntangled} and Eq.~\eqref{eqn:tdata}. Then
\begin{equation}
 E_{S}(E_{O}(R_{O}(T_{S})))={1\over 2}-{\vert S\vert \mathfrak{r} \over 4m}.
 \end{equation}
 \label{th:ff3}
 \end{theorem}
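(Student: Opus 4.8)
The plan is to follow the proof of Theorem~\ref{th:ff1} almost verbatim, the only genuine change being a count of how many independent mean-vector constraints perfect learning imposes. Since the risk is still defined on the phase space of the system $\mathcal{X}$ alone (the reference register $\mathcal{R}$ plays no role in the loss), the reduction \eqref{eqn:rott}, $R_O(T_S) = \tfrac{1}{2} - \tfrac{\text{Tr}(T_S O^T)}{4m}$, carries over unchanged, as it depends only on $O$ and $T_S$. So the first step is to invoke it directly.

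The heart of the argument is to extract the constraint on $T_S$ from perfect learning on a single entangled coherent-Fock state. Applying $U\otimes\mathbb{I}_{\mathcal{R}}$ and $V_S\otimes\mathbb{I}_{\mathcal{R}}$ to $\ket{\psi_j^\mathfrak{r}}$ and using $U\ket{w}=\ket{wO}$, $V_S\ket{w}=\ket{wT_S}$ produces two superpositions over the orthonormal Fock register $\{\ket{k}_{\mathcal{R}}\}$. Demanding $V_S\otimes\mathbb{I}_{\mathcal{R}}\ket{\psi_j^\mathfrak{r}} = U\otimes\mathbb{I}_{\mathcal{R}}\ket{\psi_j^\mathfrak{r}}$ then forces the $\mathcal{X}$-components to agree term by term, i.e. $\ket{w_k^{(j)}T_S}=\ket{w_k^{(j)}O}$ for every $k$. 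Because a coherent state is rigid — $|\langle\alpha|\beta\rangle| = e^{-\Vert\alpha-\beta\Vert^2/2}=1$ forces $\alpha=\beta$, so $\ket{\alpha}$ and $\ket{\beta}$ coincide even up to a phase only when $\alpha=\beta$ — this yields $w_k^{(j)}T_S = w_k^{(j)}O$ for all $k=1,\ldots,\mathfrak{r}$. Thus one entangled training pair supplies $\mathfrak{r}$ mean-vector constraints rather than one, with the orthogonal Fock register serving as the ``labels'' that keep the $\mathfrak{r}$ coherent-state components independently resolvable. This is exactly where the entanglement advantage, i.e. the factor $\mathfrak{r}$, enters.

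With the $w_k^{(j)}$ linearly independent and in generic position (guaranteed when $S$ is drawn from a set with no isolated points), the $|S|$ pairs pin $T_S=O$ on a subspace of dimension $\mathfrak{r}|S|$, so in an adapted basis $T_S O^T = \begin{pmatrix}\mathbb{I}_{\mathfrak{r}|S|} & 0\\ 0 & Y\end{pmatrix}$ with $Y\in\text{Orth}(2m-\mathfrak{r}|S|)$, matching \eqref{eqn:yyy}. Taking the Haar average over $O$ and using $\int dY\,\text{Tr}(Y)=0$ gives $E_O(R_O(T_S)) = \tfrac{1}{2} - \tfrac{\mathfrak{r}|S|}{4m}$, and the average over $S$ is trivial since the constraint count is $\mathfrak{r}|S|$ almost surely — precisely the closing step of Theorem~\ref{th:ff1} with $|S|$ replaced by $\mathfrak{r}|S|$.

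The main obstacle, and the only place requiring real care, is the middle step: justifying that perfect learning genuinely extracts $\mathfrak{r}$ \emph{independent} constraints from a single entangled state. Two subtleties must be handled. First, a per-state global phase in the perfect-learning condition, $V_S\otimes\mathbb{I}_{\mathcal{R}}\ket{\psi_j^\mathfrak{r}} = e^{i\theta_j}U\otimes\mathbb{I}_{\mathcal{R}}\ket{\psi_j^\mathfrak{r}}$, cannot spoil the term-by-term matching; coherent-state rigidity pins each phase to unity, so the equalities $w_k^{(j)}T_S=w_k^{(j)}O$ survive intact. Second, the resulting $\mathfrak{r}|S|$ constraints must be linearly independent, which is where the implicit requirement $\mathfrak{r}|S|\le 2m$ and the generic sampling of the mean vectors are used. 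Everything downstream is a direct repetition of the Theorem~\ref{th:ff1} computation.
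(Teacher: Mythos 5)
Your proposal is correct and follows essentially the same route as the paper's proof: reduce the risk to $\tfrac{1}{2}-\tfrac{\mathrm{Tr}(T_SO^T)}{4m}$ exactly as in Theorem~\ref{th:ff1}, observe that perfect agreement on each entangled coherent-Fock state forces $w_k^{(j)}T_S=w_k^{(j)}O$ for all $k$ (so each pair yields $\mathfrak{r}$ constraints), write $T_SO^T=\mathbb{I}_{\mathfrak{r}|S|}\oplus Y$, and Haar-average. The only difference is that you spell out the term-by-term matching via the orthonormal Fock register and coherent-state rigidity, a step the paper states without elaboration.
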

 
 \begin{proof}
 As in the setting of Theorem \ref{th:ff1}, the objective is to learn the orthogonal matrix $O$ corresponding to an $m$-mode linear optical unitary $U$. The assumption of perfect agreement of $T_{S}$ and $O$ on the training data set now corresponds to the condition $V_{\mathcal{S}}\otimes \mathbb{I}_{\mathcal{R}}\ket{\psi_{j}^{\mathfrak{r}}} = U\otimes \mathbb{I}_{\mathcal{R}}\ket{\psi_{j}^{\mathfrak{r}}}$ for all $j$. Proceeding up to (\ref{eqn:rott}) in the same way as in the proof of Theorem \ref{th:ff1}, we now note that the assumption of perfect agreement on training data implies that $w_{k}^{(j)}O = w_{k}^{(j)}T_{S}$ for all $k,j$. Taking into account linear independence of the mean vectors $w_{k}^{(j)}$ in $\mathbb{R}^{2m}$, this means that $T_{S}$ and $O$ are identical on an $\mathfrak{r}\vert S\vert$ dimensional subspace of the phase space $\mathbb{R}^{2m}$. So $T_{S}O^{T} = \mathbb{I}_{\mathfrak{r}\vert S\vert}\oplus Y$ with $Y\in \text{Orth}(2m-\mathfrak{r}\vert S\vert)$ and, instead of (\ref{eqn:csnfl}) above, one gets
\begin{align}
E_{O}(R_{O}(T_{S}))&=\int dO \left[ {1\over 2} -{\text{Tr}T_{S}O^{T}\over 4m} \right]\nonumber \\
&= {1\over 2}-{\mathfrak{r}\vert S\vert \over 4m}-\int dY {\text{Tr}Y\over 4m} \nonumber \\
&= {1\over 2}\left( 1-{\mathfrak{r}\vert S\vert \over 2m} \right).
\end{align}
Again the expectation over training sets $S$ of fixed cardinality $\vert S\vert$ is trivial when the mean vectors $w_{k}^{(j)}$ are chosen uniformly from some compact connected subset of $\mathbb{R}^{2m}$.
 \end{proof}
 
Theorem~\ref{th:ff3} shows that for a fixed training data set size, increasing the parameter $\mathfrak{r}$ in the training data (for large, $\Vert w_{k}^{(j)}\Vert$ this approximately corresponds to increasing the entanglement entropy of the training data) can reduce the generalization error. In this sense, entanglement could be seen to provide a `free-lunch'. However, as with all apparently free lunches, there are caveats. Namely, there may be a hidden cost in obtaining the entangled training data in the first place since entanglement is generally experimentally challenging to create and preserve. Thus how `free' this lunch is will depend on the relative scarcity of training states and entanglement.

It is also important to note that the enhancement provided by entanglement here is less necessary than the enhancement found in the discrete variable case. In the discrete variable case an exponential number of training pairs are required in the absence of entangled training data, whereas to learn linear optical unitaries, the number of unentangled pairs scales linearly in the number of modes.

Theorem~\ref{th:ff3} could be viewed as motivating a cost function of the form 
  \begin{align}
    C_{\text{ECFS}}^{(\mathfrak{r}, k)}(V,U) = 1-\frac{1}{k}\sum_{j = 1}^k \big\vert  \langle \psi_{j}^{\mathfrak{r}}\vert V^{\dagger}U \otimes \mathbb{I}_{\mathcal{R}}\vert \psi_{j}^{\mathfrak{r}}\rangle \big\vert^{2} \, 
    \label{eq:ECFS}
\end{align}
where the $\ket{\psi_{j}^{\mathfrak{r}}}$ are the entangled coherent-Fock states defined in Eq.~\eqref{eqn:tdata}. We note that this is a generalisation of $C_{\rm ACS}$ in the sense that it reduces to $C_{\rm ACS}$ in the limit that $\mathfrak{r} = 1$. To learn a linear optical unitary Theorem~\ref{th:ff3} implies it suffices to use $k = 2m/\mathfrak{r}$ training pairs. One could also potentially use this cost to learn more general unitaries; however, Theorem~\ref{th:ff3} does not apply in that case and therefore one may need to use a significantly larger $\mathfrak{r} k$ to minimise the generalization error.

\begin{figure*}
    \centering
    \includegraphics[width=0.95\textwidth]{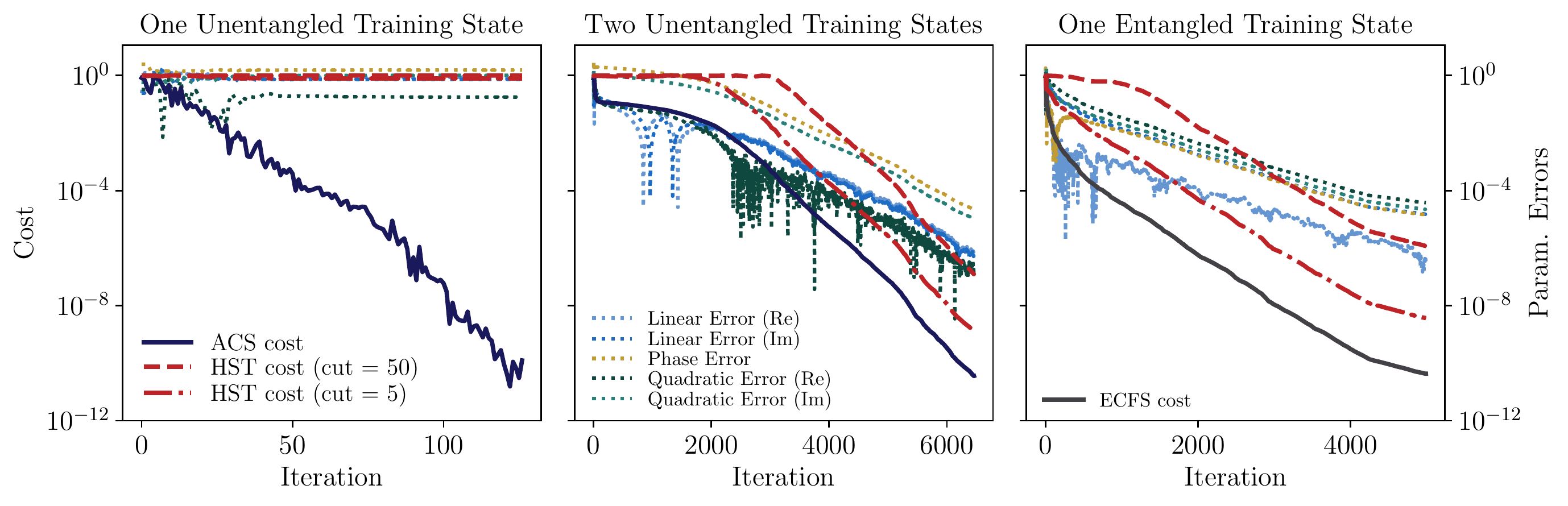}
    \caption{\textbf{Learning Gaussian operations using coherent states and entangled coherent-Fock states.} Cost as a function of iteration for learning a Gaussian operation. In the left and middle plots we use $C_{\rm ACS}$ with a single training pair ($k=1$) and two training pairs ($k=2$) respectively. In the right plot we optimize $C_{\text{ECFS}}^{(\mathfrak{r}=2,k=1)}(V,U)$ which is defined in (\ref{eq:ECFS}) from one training state of the form (\ref{eqn:tdata}). In all cases the coherent states trained on had a random energy up to a maximum of $1$ and a random phase in the range $0$ to $2\pi$. To quantify the quality of the optimization, we take the optimal parameters obtained at each iteration of the optimisation algorithm and plot both the Hilbert Schmidt cost $C_{\rm HST}$ (red) and the errors in the individual optimised parameters in arbitrary units (dotted). The dashed and dot-dashed red lines show the Hilbert-Schmidt cost $C_{\rm HST}$ on the first $50$ Fock states and on the first $5$ Fock states respectively. }
    \label{fig:NFLs}
\end{figure*}

\medskip

In Appendix~\ref{ap:GaussianNFL} we prove that
Theorems~\ref{th:ff1} and \ref{th:ff3} generalize to learning arbitrary Gaussian operations. We thus expect it to be possible to learn a single mode Gaussian operation using a single entangled training pair ($\mathfrak{r}=2$, $|S| = 1$), or two unentangled training pairs ($\mathfrak{r}=1$, $|S| = 2$) but not a single unentangled training pair ($\mathfrak{r}=1$, $|S| = 1$) since for the former the expected risk vanishes whereas the latter corresponds to a finite risk. 

This is indeed supported by our numerical results shown in Fig.~\ref{fig:NFLs} where we optimize $C_{\rm ACS}^{(1)}$ (corresponding to training on a single unentangled training state pair), $C_{\rm ACS}^{(2)}$ (corresponding to training on two unentangled training state pairs) and $C_{\text{ECFS}}^{(2, 1)}$ (corresponding to training on a single entangled training state pair) using the same variational framework set out in Section~\ref{sec:gng}. We find that while it is possible to minimise $C_{\rm ACS}^{(1)}$, this does not correspond to the Gaussian operation being successfully learnt. This is shown by the large learning errors, as measured by the truncated Hilbert-Schmidt Test cost, which quantifies the average error over all possible input states, and individual parameter errors, in the left-hand panel of  Fig.~\ref{fig:NFLs}. Conversely, as shown in the middle- and right-hand panels of Fig.~\ref{fig:NFLs}, when using entangled training data or multiple training states the learning errors are iteratively minimized as the cost is minimized.  

In Fig~\ref{fig:kerr} we present analogous results for the learning of a weak single-mode Kerr non-linearity.
Specifically, as shown in Fig.~\ref{fig:kerr} we find that a single mode ($m=1$) Kerr non-linearity of $\chi = 0.1$ and $\chi = 0.5$ can be learnt using either a single entangled training pair ($\mathfrak{r}=2$, $|S| = 1$), or two unentangled training pairs ($\mathfrak{r}=1$, $|S| = 2$) but not a single unentangled training pair ($\mathfrak{r}=1$, $|S| = 1$).

\begin{figure*}
    \centering
    \includegraphics[width=0.95\textwidth]{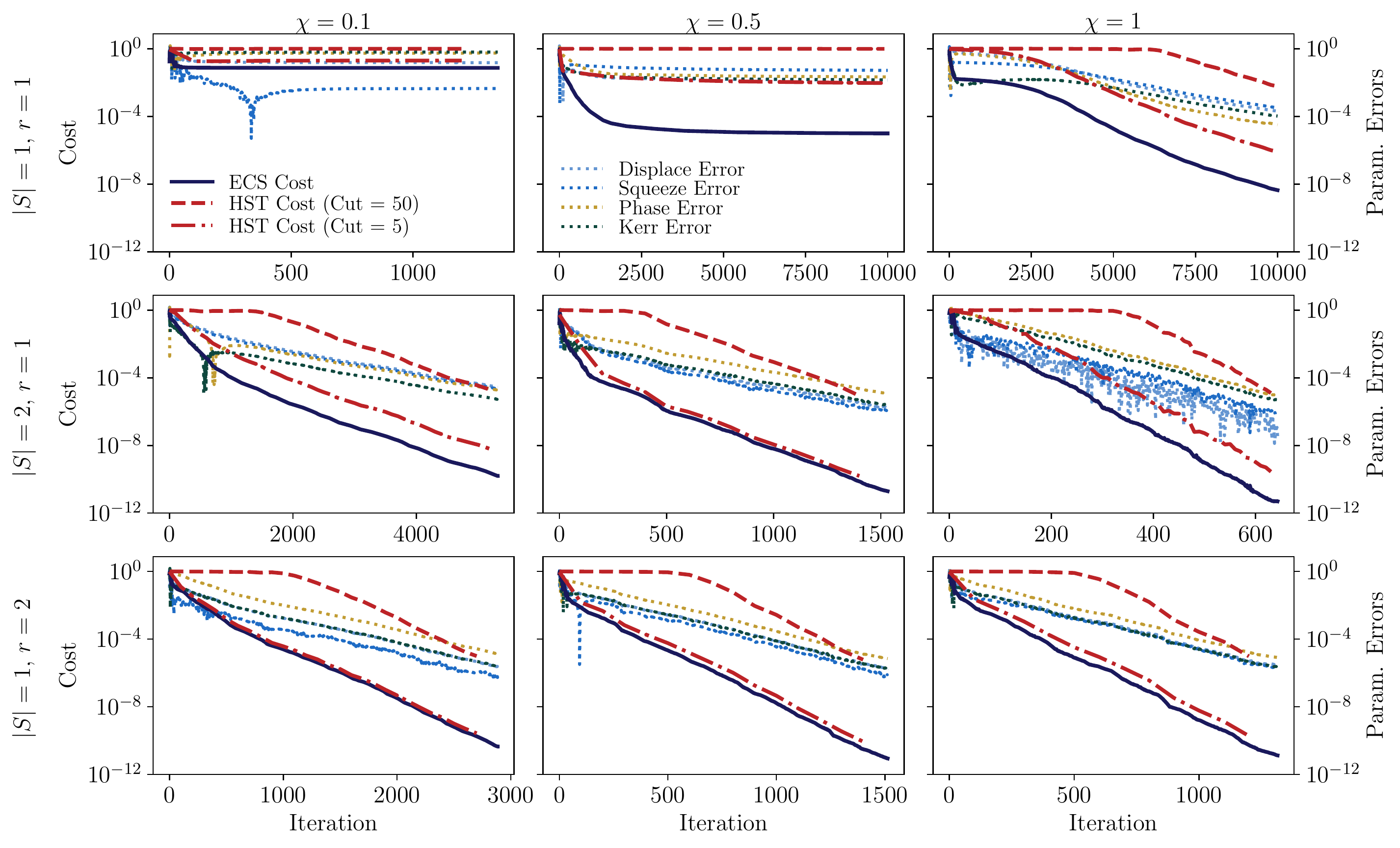}
    \caption{\textbf{Learning Kerr non-linearities using coherent states and entangled coherent-Fock states.} Cost as a function of iteration for learning a $\chi = 0.1$ (top) and $\chi = 0.5$ (bottom) non-linearity. In the left and middle columns we use $C_{\rm ACS}$ with a single training pair ($k=1$) and two training pairs ($k=2$) respectively. In the right hand column we use $C_{\rm ECFS}$ with a single entangled training pair ($k=1$ and $\mathfrak{r} = 2$). In all cases the basic coherent state training states have a random energy up to a maximum of $1$ and a random phase in the range $0$ to $2\pi$. To quantify the quality of the optimization, we take the optimal parameters obtained at each iteration of the optimization algorithm and plot both the Hilbert-Schmidt cost $C_{\rm HST}$ (red) and the errors in the individual optimized parameters in arbitrary units (dotted). The dashed red line and dot-dashed red lines show the truncated Hilbert-Schmidt cost $C_{\rm HST}$ on the first $50$ Fock states and on the first $5$ Fock states respectively.}
    \label{fig:kerr}
\end{figure*}

\subsection{Learning arbitrary unitaries and motivation of compiling cost functions\label{sec:tmssnfl}}

Theorems~\ref{th:ff1} and \ref{th:ff3} concern learning linear optical unitaries. The question remains whether similar entanglement assisted NFL theorems can be derived for learning arbitrary CV unitaries. 

To answer this question, it is useful to recall that the discrete variable (i.e., finite dimensional) entanglement-assisted quantum NFL theorem in Ref.~\cite{nopree}. Specifically, the analog of Theorem \ref{th:ff3} takes the form
\begin{equation}
    E_{S}(E_{U}(R_{U}(V_{S}))) = 1-{\mathfrak{r}^{2}\vert S\vert^{2}+d+1 \over d(d+1)}
    \label{eqn:er}
\end{equation}
where $U$ is the target unitary, $V_{S}$ is the output of the learning algorithm on entangled training states in $S$, $\mathfrak{r}$ is the Schmidt rank of the training data states, and the risk is
\begin{equation}
    R_{U}(V_{S}):= {1\over 4}\int d\psi \Vert U\ket{\psi}\bra{\psi}U^{\dagger}-V_{S}\ket{\psi}\bra{\psi}V_{S}^{\dagger}\Vert_{1}^{2}.
    \label{eqn:er2}
\end{equation}
In (\ref{eqn:er2}), the integral is over all pure states according to the Haar measure.  

A continuous variable NFL theorem cannot be derived that is strictly analogous to \eqref{eqn:er} in the discrete variable setting because there is no Haar measure over the unitary group in $B(\mathcal{H})$ for infinite dimensional $\mathcal{H}$. On the other hand, one is often only interested in the action of the target unitary $U$ on Fock states only up to a finite cutoff. For example, recent proposals for efficient updates and derivatives of Gaussian gates in parameterized CV circuits utilize cutoff recursion relations for the Fock matrix elements of the gates \cite{Miatto2020fastoptimizationof}.

Eq.~\eqref{eqn:er} implies that a single full rank state, i.e. a state with rank $d$, can be used to fully learn a unitary of rank $d$. Thus, the truncated TMSS states defined in Eq.~\eqref{eqn:tmsslim} can be used to learn arbitrary $d$ dimensional unitaries without incurring a generalization error. Taking the limit that $d$ tends to infinity, this implies that Loschmidt-Echo TMSS cost can be used to learn arbitrary CV unitaries, thus further motivating its use.

\section{Discussion}\label{sec:disc}

In this work we have established a framework for quantum compiling in continuous variable systems. We started by motivating the TMSS cost (both the Loschmidt Echo and Ricocheted variants) and the averaged coherent state cost as natural CV analogues of the Hilbert-Schmidt state cost. Our numerical implementations demonstrated the successful learning of single mode Gaussian operations, a generalized Beamsplitter operation and Kerr non-linearities using these costs.

We subsequently showed how these costs may be alternatively motivated via a series of increasingly general `(No-)Free Lunch' theorems. Firstly, the NFL theorem for Gaussian operations using coherent state mean vector training data establishes that it is possible to perfectly learn an $m$-mode Gaussian operation by training on only $2m$ coherent states. This implies that it is possible to learn arbitrary Gaussians by training on an approximation of the averaged coherent state cost using only $2m$ coherent states. Next, the NFL theorem for Gaussian operations using entangled coherent-Fock states both showed how entanglement may be used to reduce the amount of training data required to learn Gaussian operations and motivated an alternative entanglement-enhanced cost function for compiling that makes use of entangled coherent-Fock states. Finally, we argued that taking the continuum limit of the discrete variable entanglement-enhanced NFL theorem implies that to learn an arbitrary unitary on a single training state requires a full rank state. This motivates training using the TMSS cost. 

It is worth highlighting that these (No-)Free-Lunch theorems quantify the number of \textit{different} training pairs required to learn a unitary in the ideal case of perfect training.  That is, they \textit{do not} give the \textit{total} number of copies of training pairs that are required to learn the unitary. Indeed, given shot noise, a large number of copies of each pair will in fact be required to evaluate the cost.  More generally, training may be imperfect not only due to shot noise but also hardware noise or the presence of barren plateaus or local minima in the training cost function landscape. A valuable extension would be to generalize the theorems to account for imperfect learning. 

It would also be interesting to derive further NFL theorems for alternative classes in training data. For example, one might be concerned with learning a unitary $U$ from homodyne or heterodyne detection data, in which case a risk function could be defined in terms of the difference in the expected quadrature vector of the output state for the hypothesis and target unitaries. 
General unitary learning protocols based on other CV measurement-motivated risk functions, such as those associated with CV distinguishability norms~\cite{bdh2,bdh}, are expected to have associated NFL theorems and quantum compiling protocols that are adapted to the measurement class under consideration.

We further note that Two-Mode-Squeezed states are not the only choice of state to saturate the entanglement enhanced NFL bound for arbitrary unitaries. One could alternatively use any full rank state, such as cluster states. A finite energy CV cluster state is defined by $\ket{\text{CL}_{r}}=e^{iq\otimes q}(S(-r)\ket{0})^{\otimes 2}$, where $q$ is the single-mode position quadrature and $S(r)$ is the unitary squeezing operator. The state $\ket{\text{CL}_{r}}$ limits to the well-known CV cluster state for $r\rightarrow \infty$~\cite{PhysRevLett.97.110501}. One could use $\ket{\text{CL}_{r}}$ to define a faithful cost function analogous to the Loschmidt Echo and Ricocheted TMSS costs. 

As quantum hardware develops, the CV quantum compiling algorithms we have presented here are expected to find use optimizing short depth CV quantum circuits, thereby aiding the implementation of larger scale quantum algorithms. Further, we envision that tuning CV quantum resources such as intensity or squeezing could allow one to implement our CV quantum compiling algorithms in a noise resistant way. For example, results of Ref.~\cite{cvbpl} indicate that sublinear scaling (with mode number $m$) of coherent state intensity and number of quantum-limited attenuator layers does not induce barren plateaus in cost functions such as $C_{\text{ACS}_{E}}$ when restricted to linear optical unitaries. More generally, we are excited by the idea that these quantum compilation algorithms may be used to study the optical properties of new materials. It would be interesting to explore whether these algorithms could be combined with meta-learning strategies to actively design new materials with desirable properties such as controllable squeezing amplitudes or non-linearities.

\medskip

\acknowledgements

The authors thank Kunal Sharma and Patrick Coles for helpful discussions. TV acknowledges support from the LDRD program at LANL. ZH acknowledges support and AS acknowledges initial support from the LANL ASC Beyond Moore's Law project. This material is based upon work supported by the U.S. Department of Energy, Office of Science, National Quantum Information Science Research Centers, Quantum Science Center (AS).
\bibliography{phasebib.bib}

\onecolumngrid
\appendix
\section{\label{app:a}Comparison of Loschmidt-Echo (\ref{eqn:costtmss}) and Ricocheted (\ref{eqn:costtmss2}) TMSS costs}

We show how the difference between cost functions (\ref{eqn:costtmss}) and (\ref{eqn:costtmss2}) depends on the squeezing parameter $r$ and the rank of the variational ansatz $V$. Consider the truncated two-mode squeezed state in (\ref{eqn:tmsslim}) and variational {\it Ansatz} $V$ such that the rank of $V$ is $\mathfrak{r}$. For $V$ distributed with respect to the Haar measure on the unitary group $U(\mathfrak{r})$, one can see that the states $V^{\dagger}\otimes \mathbb{I}\ket{\psi^{\mathfrak{r}}_{\text{TMSS}}(r)}$ and $\mathbb{I}\otimes V^{*}\ket{\psi^{\mathfrak{r}}_{\text{TMSS}}(r)}$ are nearly equal in expectation for large $r$. Specifically, the expected modulus of the inner product of these states is given by
\begin{align}
    E_{V}\left( \Big\vert \left( V^{\dagger}\otimes \mathbb{I}\ket{\psi^{\mathfrak{r}}_{\text{TMSS}}(r)},\mathbb{I}\otimes V^{*}\ket{\psi^{\mathfrak{r}}_{\text{TMSS}}(r)} \right)\Big\vert \right)&= {1-\tanh^{2}r\over 1-\tanh^{2\mathfrak{r}}r}E_{V}\left( \sum_{\ell,\ell ' =0}^{\mathfrak{r}}\tanh^{\ell+\ell'}r\vert V_{\ell,\ell'}\vert^{2} \right)\nonumber \\
    &= {1\over \mathfrak{r}}{1-\tanh^{2}r\over 1-\tanh^{2\mathfrak{r}}r} \left( {1-\tanh^{\mathfrak{r}}r \over 1-\tanh r } \right)^{2} \nonumber \\
    &={1\over \mathfrak{r}}{1+\tanh r\over 1+\tanh^{\mathfrak{r}}r}{1-\tanh^{\mathfrak{r}}r\over 1-\tanh r}\nonumber \\
    &\sim \tanh^{\mathfrak{r-1}}r \; \text{ for } r\rightarrow \infty
    \label{eqn:aooo}
\end{align}
For fixed $\mathfrak{r}$, the $r\rightarrow \infty$ limit is 1. Further, even if $\mathfrak{r}$ is increased by a multiplicative factor $\mathfrak{r}\mapsto \lambda \mathfrak{r}$, i.e., the unitaries considered are in $U(\lambda \mathfrak{r})$, the value of the expectation remains close to 1 if one simply adjusts the squeezing $r$ according to $r\mapsto r + \ln \lambda$. To see this, just expand the asymptotic function in (\ref{eqn:aooo}) with respect to the small number $e^{-2r}$ to get $\tanh^{\mathfrak{r}-1}r \sim 1-2(\mathfrak{r}-1)e^{-2r}$ for large $r$.

\section{\label{app:ffn}Faithfulness of $\tilde{C}_{\text{R-TMSS}_{r}}$}

To prove the faithfulness of $\tilde{C}_{\text{R-TMSS}_{r}}$ we start by showing that it can be written in terms of the inner product $(V,U)_{\rho_{\beta}^{\otimes m}}:=\text{Tr}\sqrt{\rho_{\beta}}^{\otimes m}U\sqrt{\rho_{\beta}}^{\otimes m}V^{\dagger}$, where $\rho_{\beta}\propto \sum_{n=0}^{\infty}e^{-\beta n}\ket{n}\bra{n}$ is a single-mode thermal state with inverse temperature $\beta$. This inner product appears in the theory of generalized conditional expectations \cite{petz} and quantum relative entropies \cite{ruskai}. Despite its complicated appearance, $(V,U)_{\rho_{\beta}^{\otimes m}}$ is actually efficiently computable using pure entangled Gaussian state preparation. To demonstate this fact, consider $m$ copies of a two-mode squeezed state prepared in mode pairs $(A_{j},B_{j})$, $j=1,\ldots, m$:
\begin{equation}
    \ket{\psi^m_{\text{TMSS}}(r)} \propto \bigotimes_{j=1}^{m}\sum_{\ell=0}^{\infty}(\tanh r)^{\ell}\ket{\ell}_{A_{j}} \otimes \ket{\ell}_{B_{j}}
\end{equation} with squeezing parameter $r$ satisfying $-2\ln \tanh r = \beta$. It follows that \begin{align}
(V,U)_{\rho_{\beta}^{\otimes m}}&= \left( {1-e^{-{\beta }}}\right)^{m} \sum_{\vec{\ell},\vec{\ell}'} e^{-{\beta \over 2}\Vert \vec{\ell}+\vec{\ell}'\Vert_{1}}U_{\vec{\ell},\vec{\ell}'}(V^{\dagger})_{\vec{\ell}',\vec{\ell }}\nonumber \\
&= \left( {1-\tanh^{2}r}\right)^{m}\sum_{\vec{\ell},\vec{\ell}'} (\tanh r)^{\Vert \vec{\ell} +\vec{\ell}'\Vert_{1}} U_{\vec{\ell},\vec{\ell}'}{V}_{\vec{\ell },\vec{\ell}'}^{*}\nonumber \\
&= \text{Tr}\ket{\psi^m_{\text{TMSS}}(r)}\bra{\psi^m_{\text{TMSS}}(r)}^{\otimes m} U_{A}\otimes V_{B}^{*}
\label{eqn:thtmss}
\end{align} where the sums over $\vec{\ell},\vec{\ell}'$ are over $\mathbb{Z}_{\ge 0}^{\times m}$. It follows from the definition (\ref{eqn:costtmss2}) that $C_{\text{R-TMSS}_{r}}(V,U)=1-\vert (V,U)_{\rho_{\beta}^{\otimes m}} \vert^{2}$ and from (\ref{eqn:costtmssnorm}) that $\tilde{C}_{\text{R-TMSS}_{r}}(V,U)=1-{\vert (V,U)_{\rho_{\beta}^{\otimes m}} \vert^{2}\over \vert (V,V)_{\rho_{\beta}^{\otimes m}} \vert^{2}\vert (U,U)_{\rho_{\beta}^{\otimes m}}\vert^{2}}  $. The fact that $(X,Y)_{\rho_{\beta}^{\otimes m}}$ is linear in $Y$, conjugate linear in $X$, $(X,X)_{\rho_{\beta}^{\otimes m}} \in \mathbb{R}$ (with value 0 if and only if $X=0$) are clear.  Therefore, the Cauchy-Schwarz inequality
\begin{equation}
    \vert (V,U)_{\rho_{\beta}^{\otimes m}} \vert^{2} \le \vert (V,V)_{\rho_{\beta}^{\otimes m}}\vert\vert (U,U)_{\rho_{\beta}^{\otimes m}} \vert
\end{equation}
holds. It implies the faithfulness of (\ref{eqn:costtmssnorm}), i.e.,  $\tilde{C}_{\text{R-TMSS}_{r}}(V,U)=0$ if and only if $U=e^{i\phi}V$ for some $\phi \in [0,2\pi)$.

\section{CV NFL theorems for Gaussian operations}\label{ap:GaussianNFL}

Here, we show that Theorem~\ref{th:ff1} and  Theorem~\ref{th:ff3} can be generalized to learning arbitrary Gaussian operations. 

\smallskip

In Corollary \ref{cor:opop} below, the target unitary $U$ is associated with $L\in \text{Sp}(2m,\mathbb{R})$ by $U^{\dagger}RU=RL$ and the learning algorithm outputs $T_{S}\in \text{Sp}(2m,\mathbb{R})$ when given training set $S$ in (\ref{eqn:coh_trainset}). A $2m\times 2m$ real matrix $L$ is symplectic iff $L^{T}\Delta L=\Delta$, where $\Delta=\begin{pmatrix}0&1\\-1&0\end{pmatrix}^{\oplus m}$ is the standard symplectic form on $\mathbb{R}^{2m}$.

\begin{corollary}
 For any $m\in \mathbb{N}$, define   $\mathcal{G}^{(2m)}:=\lbrace O_{1}ZO_{2}\rbrace$ where $O_{1},O_{2}\in \text{\normalfont{Orth}}(2m)\cap \text{\normalfont{Sp}}(2m,\mathbb{R})$ and $Z=\bigoplus_{j=1}^{m}\text{\normalfont{diag}}(z_{j},z_{j}^{-1})$ where $z_{j}\in \mathbb{R}_{+}$. Let $S$ be the training data (\ref{eqn:coh_trainset}) with $\vert S\vert \equiv 0\mod 2$, and let $L\in \textup{Sp}(2m,\mathbb{R})$. If $R_{L}(T_{S})$ is the risk (\ref{eqn:rikrik}) and $\mathcal{G}^{(2m)}$ is equipped with the probability measure $dO_{1}\;dO_{2}\;\mu(d\vec{z})$, with $dO_{1}$ and $dO_{2}$ Haar measure and $\mu(d\vec{z})$ a probability measure on $\mathbb{R}^{m}$, then  $E_{S}(E_{L}(R_{L}(T_{S})))$ is given by (\ref{eqn:th1eq}).
 \label{cor:opop}
\end{corollary}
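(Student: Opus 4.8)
The plan is to reduce the symplectic risk to a normalized Frobenius distance and then show that the squeezing degrees of freedom average out under the Haar measure on the passive subgroup, leaving exactly the computation of Theorem~\ref{th:ff1}. First I would note that, just as in the passage from (\ref{eqn:rikrik}) to (\ref{eqn:rott}), the Gaussian integral over $x$ collapses the loss to second moments, so that $R_{L}(T_{S})=\tfrac{1}{8m}\text{Tr}\!\left[(T_{S}-L)(T_{S}-L)^{T}\right]=\tfrac{1}{8m}\Vert T_{S}-L\Vert_{F}^{2}$, now with $L\in\text{Sp}(2m,\mathbb{R})$ in place of an orthogonal matrix. Unlike the orthogonal case one cannot immediately write $R_{L}=\tfrac12-\tfrac{1}{4m}\text{Tr}(T_{S}L^{T})$, because $LL^{T}\neq\mathbb{I}$; the content of the corollary is precisely that the extra squeezing contributions must wash out on average.

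Next I would insert the Euler (Bloch--Messiah) decomposition $L=O_{1}ZO_{2}$ that defines $\mathcal{G}^{(2m)}$ and exploit the bi-orthogonal invariance of the Frobenius norm. Writing $\widetilde{T}:=O_{1}^{T}T_{S}O_{2}^{T}$, which is again symplectic, gives $\Vert T_{S}-L\Vert_{F}^{2}=\Vert \widetilde{T}-Z\Vert_{F}^{2}$, while the perfect-learning constraint $w_{j}T_{S}=w_{j}L$ becomes $v_{j}\widetilde{T}=v_{j}Z$ with $v_{j}:=w_{j}O_{1}$. In this frame the target is the fixed diagonal squeezer $Z$, and the role of the Haar average over $O_{1}$ is to randomize the orientation of the $|S|$-dimensional training subspace $\text{span}\{v_{j}\}$ relative to the squeezing axes. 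Here the parity hypothesis $|S|\equiv0\bmod2$ enters: it guarantees that (generically) the training subspace $W$ and its symplectic complement $W^{\Delta}$ are non-degenerate symplectic subspaces of even dimension, so that a symplectic $\widetilde{T}$ agreeing with $Z$ on $W$ is parametrized by its action on the $(2m-|S|)$-dimensional complement, where the compact passive group $\text{Orth}(2m-|S|)\cap\text{Sp}(2m-|S|,\mathbb{R})\cong U\!\big(m-\tfrac{|S|}{2}\big)$ plays the role of the block $Y$ in (\ref{eqn:yyy})--(\ref{eqn:csnfl}).

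The final step is to carry out the Haar integrals over $O_{1}$, $O_{2}$ and the passive ``free part'' $Y$, together with the $\mu$-average over $\vec{z}$. I expect the cross term $\text{Tr}(\widetilde{T}Z)$ and the self term $\Vert\widetilde{T}\Vert_{F}^{2}$ to organize so that every $\vec{z}$-dependent contribution is multiplied by a Haar expectation that vanishes — concretely by $E[Y]=0$ and $E[O_{i}]=0$, which hold because $-\mathbb{I}$ lies in each of these compact groups (the same $\pm$ symmetry that forces $\int dY\,\text{Tr}Y=0$ in Theorem~\ref{th:ff1}). What should survive is only the $|S|$-dimensional block on which $\widetilde{T}$ and $Z$ agree, contributing $\tfrac{|S|}{4m}$ exactly as before, giving $E_{S}E_{L}(R_{L}(T_{S}))=\tfrac12-\tfrac{|S|}{4m}$ independently of $\mu$; the outer expectation $E_{S}$ is again trivial since generic coherent-state mean vectors are linearly independent. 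The main obstacle is exactly this cancellation: one must verify that the non-compact squeezing data enters $\Vert\widetilde{T}-Z\Vert_{F}^{2}$ only through terms annihilated by the passive Haar average, which is what makes the result $\mu$-independent and is the step where both the evenness of $|S|$ and the symplectic-complement structure are indispensable.
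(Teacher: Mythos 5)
You correctly identify the real difficulty: for a symplectic target $L$ one has $LL^{T}\neq\mathbb{I}$, so the passage from (\ref{eqn:rikrik}) to (\ref{eqn:rott}) does not carry over verbatim. But your proposed resolution does not close this gap. Expanding your normalized Frobenius distance, $\Vert \widetilde{T}-Z\Vert_{F}^{2}=\Vert\widetilde{T}\Vert_{F}^{2}+\Vert Z\Vert_{F}^{2}-2\,\text{Tr}(\widetilde{T}Z)$ (with $Z=Z^{T}$), the self term $\Vert Z\Vert_{F}^{2}=\sum_{j}(z_{j}^{2}+z_{j}^{-2})$ carries \emph{no} orthogonal-group factor whatsoever, so no Haar average over $O_{1}$, $O_{2}$ or the residual block $Y$ can annihilate it; its expectation is $E_{\mu}\bigl[\sum_{j}(z_{j}^{2}+z_{j}^{-2})\bigr]\geq 2m$ and is manifestly $\mu$-dependent. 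Likewise $\Vert\widetilde{T}\Vert_{F}^{2}=\Vert T_{S}\Vert_{F}^{2}$ exceeds $2m$ whenever the learner outputs a non-orthogonal symplectic matrix. Hence the ``wash-out'' you are counting on — that every $\vec{z}$-dependent contribution is multiplied by a vanishing Haar expectation — is false for these diagonal self terms, and your route cannot produce the $\mu$-independent answer $\tfrac12-\tfrac{\vert S\vert}{4m}$. The step is flagged as an expectation (``I expect the cross term and the self term to organize so that\ldots'') rather than proved, and it is precisely where the argument breaks.

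The paper's proof proceeds differently: it retains the form $R_{L}(T_{S})=\tfrac12-\tfrac{1}{4m}\text{Tr}(T_{S}L^{T})$ from (\ref{eqn:rott}), and the substantive content is a block-structure argument that your proposal has no analogue of. Namely, writing $T_{S}L^{T}=\bigl(\begin{smallmatrix}\mathbb{I}_{\vert S\vert}&A\\B&Y\end{smallmatrix}\bigr)$, the symplectic conditions (\ref{eqn:sympcond}) together with non-degeneracy of the symplectic form force $B=0$, then $Y\in\text{Sp}(2m-\vert S\vert)$, then $A=0$; only after this does one invoke the Euler decomposition $Y=W_{1}FW_{2}$ with Haar-distributed $W_{1},W_{2}$ to conclude $E_{Y}(\text{Tr}\,Y)=0$. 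Your symplectic-complement and parity remarks gesture at the right structure, but you never establish that the off-diagonal blocks vanish, and your change of frame $\widetilde{T}=O_{1}^{T}T_{S}O_{2}^{T}$ additionally entangles the learner's output with the random decomposition of the target, complicating the expectation over $L$. To repair the proposal you would need either to adopt the paper's normalization of the risk for symplectic matrices, or to supply the $A=B=0$ argument and handle the self terms explicitly rather than hoping they average away.
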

\begin{proof}
Every symplectic matrix $L$ is in $\mathcal{G}^{(2m)}$
due to the Bloch-Messiah decomposition \cite{serafini}. From (\ref{eqn:rott}), it follows that $R_{L}(T_{S})={1\over 2}-{\text{Tr}T_{S}L^{T}\over 4m}$. Since $T_{S}$ and $L$ are assumed to agree on the subspace of $\mathbb{R}^{2m}$ spanned by the training data, one can write $T_{S}L^{T}=\begin{pmatrix}\mathbb{I}_{\vert S\vert} & A \\ B &Y\end{pmatrix}$. But $T_{S}L^{T}\in\text{Sp}(2m,\mathbb{R})$ implies that
\begin{align}
    \Delta_{\vert S\vert}+B^{T}\Delta_{2m-\vert S\vert}B&=\Delta_{\vert S\vert} \nonumber \\
    Y^{T}\Delta_{2m-\vert S\vert}Y+A^{T}\Delta_{\vert S\vert}A&=\Delta_{2m-\vert S\vert}
    \label{eqn:sympcond}
\end{align}
where $\Delta=\Delta_{\vert S\vert}\oplus \Delta_{2m-\vert S\vert}$. The first equation in (\ref{eqn:sympcond}) implies $B=0$ from non-degeneracy of the symplectic form. Let $V$ be the Gaussian unitary that acts on the canonical operators as $V^{\dagger}RV=RT_{S}L^{T}$. Then the action of $V$ on the $2m$-vector of operators $(0,\ldots, 0,\tilde{R})$ with $\tilde{R}:=(q_{\vert S\vert+1},p_{\vert S\vert+1},\ldots, q_{m},p_{m})$ is $(\tilde{R}B, \tilde{R}Y) = (0,\ldots,0,\tilde{R}Y)$, so the unitary invariance of the canonical commutation relation implies that $Y\in \text{Sp}(2m-\vert S\vert)$. It then follows from the second equation of (\ref{eqn:sympcond}) that $A=0$.  One concludes that
\begin{align}
    E_{L}(R_{L}(T_{S}))&={1\over 2} - {\vert S\vert \over 4m} - E_{Y}\left( {\text{Tr}Y\over 4m} \right)
    \label{eqn:drdrdr}
\end{align}
where $E_{Y}$ is taken with respect to the measure on $Y$ induced by the measure on $\mathcal{G}^{(2m)}$. Since $Y$ is a symplectic matrix, it can be written $W_{1}FW_{2}$ with $W_{1},W_{2}\in \text{Orth}(2m-\vert S\vert)$. The fact that $W_{1}$ and $W_{2}$ are independent and distributed according to Haar measure follows from the restricting the Haar measure on $O_{1}$ and $O_{2}$. Therefore, the expectation over $Y$ in (\ref{eqn:drdrdr}) is zero. 
\end{proof}

An entirely equivalent argument can be used to generalize Theorem~\ref{th:ff3} to learning Gaussian operations.

\section{\label{app:bb}CV NFL theorem with Gaussian training data}
A centered Gaussian state is a Gaussian state that satisfies $\langle R \rangle =0$ and, therefore, is uniquely defined by its covariance matrix $\Sigma_{i,j}={1\over 2}\langle [R_{i},R_{j}]_{+}\rangle$, which is a positive $2m\times 2m$ matrix. For example, the only centered coherent state is the vacuum $\Sigma_{\ket{0}}=\text{diag}({1\over 2},\ldots,{1\over 2})$. For examples with entanglement, the two-mode squeezed states and CV cluster states are pure, centered Gaussian states with $m=2$. Let $\ket{\psi}$ be a centered Gaussian state and let $U$ be a linear optical unitary that satisfies $U^{\dagger}RU = RO$. Then $\Sigma_{U\ket{\psi}}=O^{T}\Sigma_{\ket{\psi}}O$. Instead of training with coherent state mean vectors, consider now linearly independent training data  $S=\lbrace (\Sigma^{(j)},O^{T}\Sigma^{(j)}O) \rbrace_{j=1}^{\vert S\vert}$ where $\Sigma^{(j)}$ is the covariance matrix of an $m$-mode pure, centered Gaussian state satisfying $\text{rank}{\mathbb{I}_{2m}\over 2} - \Sigma^{(j)}=2$ for each $j$, i.e., the state is squeezed only in one phase space direction.  We consider the risk function
\begin{align}
R_{O}(T_{S})&={1\over m(\log D)^{m}}\int_{\Omega}d\Sigma\int  \Vert T^{T}_{S}\Sigma T_{S} - O^{T}\Sigma O\Vert_{2}^{2}
\label{eqn:risk2}
\end{align}
where the integral $d\Sigma$ is taken over a compact subset $\Omega$ of covariance matrices that satisfy $\Vert \Sigma \Vert \le {D\over 2}$ and $\vert \text{det} 2\Sigma \vert = 1$. Physically, $\Omega$ is the set of covariance matrices of pure Gaussian states with maximal squeezing parameter $r= {1\over 2}\log D$. We assume that the learning algorithm outputs an orthogonal matrix $T_{S}$ such that $T^{T}_{S}\Sigma^{(j)}T_{S}=O^{T}\Sigma^{(j)} O$ for all $j$, i.e., the algorithm produces perfect agreement with the target on the training data set. With the cost function (\ref{eqn:risk2}), Theorem \ref{th:ff} shows that the expected risk is reduced by a function scaling as $\vert S\vert^{2}$ instead of $\vert S\vert$ in Theorem \ref{th:ff1}. 

\begin{theorem}Let $O$ be distributed according to the normalized Haar measure on $\text{\normalfont{Orth}}(2m)$ and let the training data $S$ of cardinality $\vert S\vert$ be chosen uniformly from a compact subset of $m$-mode pure, centered Gaussian states satisfying the rank condition above. Then for the risk function (\ref{eqn:risk2}),
\begin{equation}
E_{S}(E_{O}(R_{O}(T_{S})))= {D^{2}-D^{-2}\over 4 \log D}\left(1-{1\over2(m+1)}\right)  - { (D-D^{-1})^{2}(\vert S\vert^{2}+1) \over 8m(\log D)^{2}} + \mathcal{O}(m^{-2}).
\label{eqn:r2}
\end{equation}
\label{th:ff}
\end{theorem}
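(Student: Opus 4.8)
\textit{Proof proposal.} The plan is to reduce the risk \eqref{eqn:risk2} to a trace functional of the single orthogonal matrix $Q:=T_{S}O^{T}$, exactly as \eqref{eqn:rott} did for Theorem~\ref{th:ff1}, and then carry out the averages in stages. Since $T_{S}$ and $O$ are orthogonal and each $\Sigma$ is symmetric, expanding the Hilbert--Schmidt norm gives
\begin{equation}
\Vert T_{S}^{T}\Sigma T_{S}-O^{T}\Sigma O\Vert_{2}^{2}=2\,\text{Tr}[\Sigma^{2}]-2\,\text{Tr}[Q^{T}\Sigma Q\Sigma],
\end{equation}
so that $R_{O}(T_{S})=\tfrac{2}{m(\log D)^{m}}\int_{\Omega}d\Sigma\,(\text{Tr}[\Sigma^{2}]-\text{Tr}[Q^{T}\Sigma Q\Sigma])$. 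The first piece is independent of $O$ and $S$ and, after the elementary squeezing integrals below, supplies the leading $\tfrac{D^{2}-D^{-2}}{4\log D}$; all the $O$- and $S$-dependence sits in the bilinear cross term $\text{Tr}[Q^{T}\Sigma Q\Sigma]$. Crucially this term is \emph{quadratic} in $Q$, in contrast to the term $\text{Tr}[Q]$ that is linear in $Q$ in Theorem~\ref{th:ff1}; this is what ultimately produces the $\vert S\vert^{2}$ rather than $\vert S\vert$ scaling.

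Next I would translate perfect learning into a block structure for $Q$. The condition $T_{S}^{T}\Sigma^{(j)}T_{S}=O^{T}\Sigma^{(j)}O$ is equivalent to $[Q,\Sigma^{(j)}]=0$ for every $j$. Because each training covariance obeys $\text{rank}(\tfrac{\mathbb{I}}{2}-\Sigma^{(j)})=2$, it carries two distinct non-$\tfrac12$ eigenvalues $\tfrac12 e^{\pm 2r_{j}}$ along an orthonormal pair $u_{j},v_{j}$ and the value $\tfrac12$ elsewhere; commuting with $\Sigma^{(j)}$ thus forces $Q$ to fix each line $\mathbb{R}u_{j}$ and $\mathbb{R}v_{j}$. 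For linearly independent training data these directions span a $2\vert S\vert$-dimensional subspace $V_{\rm train}$, on which $Q$ acts as a fixed sign matrix, while on the complement $Q$ restricts to an orthogonal $Y$ that, the algorithm having no information about the action of $O$ there, is Haar-distributed on $\text{Orth}(2m-2\vert S\vert)$ as $O$ ranges over its Haar measure --- the direct analogue of the $\mathbb{I}\oplus Y$ decomposition \eqref{eqn:yyy}.

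I would then perform three nested averages. Parametrize $\Sigma=\tfrac12 W Z^{2}W^{T}$ by Bloch--Messiah, with orientation $W$ Haar on the orthogonal--symplectic group $K=\text{Orth}(2m)\cap\text{Sp}(2m,\mathbb{R})\cong U(m)$ and diagonal squeezing $Z^{2}=\bigoplus_{j}\text{diag}(e^{2r_{j}},e^{-2r_{j}})$, $r_{j}\in[0,\tfrac12\log D]$, the factor $(\log D)^{m}$ normalizing the $\vec{r}$-volume. First, averaging the cross term over $Y$ using $E_{Y}[Y]=0$ and $E_{Y}[Y^{T}MY]=\tfrac{\text{Tr}M}{2m-2\vert S\vert}\mathbb{I}$ kills the off-diagonal blocks and leaves a $V_{\rm train}$-block contribution together with $(\text{Tr}\Sigma_{22})^{2}/(2m-2\vert S\vert)$. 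Second, averaging over the orientation $W$ converts these into the rotational invariants $\text{Tr}\Sigma$, $(\text{Tr}\Sigma)^{2}$ and $\text{Tr}\Sigma^{2}$, weighted by $\dim V_{\rm train}=2\vert S\vert$ and its square. Third, the remaining squeezing integrals are elementary: $\langle e^{2r}+e^{-2r}\rangle=\tfrac{D-D^{-1}}{\log D}$ and $\langle e^{4r}+e^{-4r}\rangle=\tfrac{D^{2}-D^{-2}}{2\log D}$, which generate the $D^{2}-D^{-2}$ and $(D-D^{-1})^{2}$ factors and the $\log D$ denominators of \eqref{eqn:r2}. Collecting the $(\text{Tr}\Sigma)^{2}$ terms carrying the factor $(2\vert S\vert)^{2}$ yields the $\vert S\vert^{2}$ reduction. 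Finally $E_{S}$ is trivial: since the training states are drawn from a compact connected set with no isolated points, their squeezing directions are almost surely linearly independent, $\dim V_{\rm train}=2\vert S\vert$, and the answer is independent of the particular $S$, exactly as in Theorem~\ref{th:ff1}.

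The main obstacle is the orientation step: the degree-two Haar average over $K\cong U(m)$, i.e. fourth-order moments of $W$, which is genuinely more delicate than the single orthogonal-group average $E_{O}$ in Theorem~\ref{th:ff1}. It is this $U(m)$ Weingarten computation that fixes the exact coefficient $\tfrac{1}{2(m+1)}$ in the leading term and controls the $\mathcal{O}(m^{-2})$ remainder; keeping track of which invariants survive the orientation average --- and checking that the residual sign ambiguity of $Q$ on $V_{\rm train}$ does not perturb the reported terms --- is where the real work lies.
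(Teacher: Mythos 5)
Your opening reduction is sound and matches the paper's: expanding the Hilbert--Schmidt norm gives $R_{O}(T_{S})=\tfrac{2}{m(\log D)^{m}}\int_{\Omega}d\Sigma\,\bigl(\mathrm{Tr}[\Sigma^{2}]-\mathrm{Tr}[Q^{T}\Sigma Q\Sigma]\bigr)$ with $Q=T_{S}O^{T}$ (the paper reaches the same two pieces via the vec isometry), the first piece integrates to $\tfrac{D^{2}-D^{-2}}{4\log D}$ exactly as you say, and your squeezing moments are correct. But the proposal has a genuine gap at its center: you explicitly defer the fourth-order Haar moment computation (``where the real work lies''), and that computation \emph{is} the proof. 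The paper's Lemma evaluates $\int dW\,(e_{i}^{T}W^{T}LWe_{i})^{2}=\tfrac{2m+\mathrm{Tr}L^{2}+(\mathrm{Tr}L)^{2}}{4m^{2}+4m}$ and the corresponding off-diagonal integral over $\mathrm{Orth}(2m)$, applies them with $L=T_{S}^{T}O$, and it is these explicit coefficients that produce the $\tfrac{1}{2(m+1)}$ correction and the $\mathcal{O}(m^{-2})$ remainder in \eqref{eqn:r2}. Worse, you propose to do this average over $K=\mathrm{Orth}(2m)\cap\mathrm{Sp}(2m,\mathbb{R})\cong U(m)$, whereas the set $\Omega$ in \eqref{eqn:risk2} and the measure $dW$ in the paper's proof are explicitly Haar over the full $\mathrm{Orth}(2m)$; the degree-two moments of these two groups differ, so a $U(m)$ Weingarten calculation would not reproduce the stated coefficients even if carried out.

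The second discrepancy is your block structure for $Q$. You argue (correctly, as a matter of linear algebra) that $[Q,\Sigma^{(j)}]=0$ pins down $Q$ up to sign on \emph{two} eigendirections per training state, giving a $2\vert S\vert$-dimensional trained block and $Y\in\mathrm{Orth}(2m-2\vert S\vert)$. The paper instead writes $T_{S}^{T}O=\mathbb{I}_{\vert S\vert}\oplus Y$ with $Y\in\mathrm{Orth}(2m-\vert S\vert)$, and it is this count that feeds $E_{O}[(\mathrm{Tr}\,T_{S}^{T}O)^{2}]=\vert S\vert^{2}+1$ and hence the $\vert S\vert^{2}+1$ in \eqref{eqn:r2}. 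Carried through your decomposition, the same expectation gives $4\vert S\vert^{2}+O(1)$, so your route as written does not land on the formula you are asked to prove; and the residual sign ambiguity you flag (the trained block is a sign matrix, not $\mathbb{I}$) is left unresolved rather than shown to be harmless. In short: the strategy is the paper's strategy (reduce to trace functionals of $Q$, impose the block structure, take Haar moments), but the technical core is missing and two of your modeling choices --- the orientation group and the trained-block dimension --- would change the answer.
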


\begin{proof}
The integral defining the risk (\ref{eqn:risk2}) is over a compact set $\Omega$ of pure, centered Gaussian states that have squeezing parameters with magnitude uniformly distributed between $r=-{1\over 2}\log D$ and $r={1\over 2}\log D$ where $D>1$. Specifically, the covariance matrices $\Sigma$ appearing in the integral have the form 
\begin{equation}\Sigma \in \left\lbrace W^{T}\text{diag}\left({e^{-2r_{1}}\over 2},{e^{2r_{1}}\over 2},\ldots ,{e^{-2r_{m}}\over 2},{e^{2r_{m}}\over 2}\right)W : W\in \text{Orth}(2m) \, , \, r_{j}\in \left[-{1\over 2}\log D,{1\over 2}\log D\right] \right\rbrace.
\end{equation}
For calculating the expected risk function, it is  advantageous to use the ${\bm{vec}}$ functor. For a matrix $A\in \text{End}(\mathbb{R}^{2m})$,
\begin{equation}
{\bm{vec}}A = \sum_{i=1}^{2m}\sum_{j=1}^{2m}A_{i,j}e_{i}\otimes e_{j} \in \mathbb{R}^{4m^{2}}
\end{equation}
where $\lbrace e_{j}\rbrace_{j=1}^{2m}$ is an orthonormal basis of $\mathbb{R}^{2m}$. For example ${\bm{vec}}(W^{T}A W) = (W\otimes W) {\bm{vec}}(A)$. Also, ${\bm{vec}}$ is an isometry from $\text{End}(\mathbb{R}^{2m})$ as a finite-dimensional Hilbert space with Hilbert-Schmidt inner product to $\mathbb{R}^{4m^{2}}$ as a Hilbert space with Euclidean inner product: $\Vert A \Vert_{2}^{2}=\Vert {\bm{vec}}A \Vert^{2}$. The risk (\ref{eqn:risk2}) becomes
\begin{footnotesize}
\begin{align}
R_{O}(T_{S})&={1\over m(\log D)^{m}}\int_{-{1\over 2}\log D}^{{1\over 2}\log D}dr_{1}\cdots \int_{-{1\over 2}\log D}^{{1\over 2}\log D}dr_{m} \int dW \nonumber \\
&{} \left[  \Vert T_{S}^{T} W^{T}\text{diag}\left({e^{-2r_{1}}\over 2},{e^{2r_{1}}\over 2},\ldots ,{e^{-2r_{m}}\over 2},{e^{2r_{m}}\over 2}\right)W T_{S} -O^{T} W^{T}\text{diag}\left({e^{-2r_{1}}\over 2},{e^{2r_{1}}\over 2},\ldots ,{e^{-2r_{m}}\over 2},{e^{2r_{m}}\over 2}\right)W O \Vert_{2}^{2}\right] \nonumber \\
&= {1\over 4m(\log D)^{m}} \int d\vec{r}dW \Vert (T_{S}\otimes T_{S} - O\otimes O)(W\otimes W)\sum_{j=1}^{m}\left( e^{-2r_{j}}e_{2j-1}\otimes e_{2j-1} + e^{2r_{j}}e_{2j}\otimes e_{2j} \right)  \Vert^{2}
\end{align}
\end{footnotesize}
with $dW$ the normalized Haar measure over $\text{Orth}(2m)$, and the second equality follows from using the isometric property of ${\bm{vec}}$. In the last equality, we also shortened the integral notation. Expanding the square of the Euclidean distance gives a sum of two integrals:
\begin{align}
R_{O}(T_{S})&= {2\over 4m(\log D)^{m}}\int d\vec{r} \sum_{j=1}^{m}\left( e^{-4r_{j}}+e^{4r_{j}} \right)\nonumber \\
&- {2\over 4m(\log D)^{m}}\int d\vec{r} \int dW \sum_{k,j=1}^{m} \left[ e^{-2(r_{j}+r_{k})}(e_{2j-1}^{T}W^{T}T_{S}^{T}OWe_{2k-1})^{2} \right. \nonumber \\
&{} \left. + e^{-2(r_{j}-r_{k})}(e_{2j-1}^{T}W^{T}T_{S}^{T}OWe_{2k})^{2} \right. \nonumber \\
&{} \left. + e^{2(r_{j}-r_{k})}(e_{2j}^{T}W^{T}T_{S}^{T}OWe_{2k-1})^{2} \right. \nonumber \\
&{} \left. + e^{2(r_{j}+r_{k})}(e_{2j}^{T}W^{T}T_{S}^{T}OWe_{2k})^{2} \right] 
\label{eqn:yyy2}
\end{align}

The integral in the first line evaluates to ${D^{2}-D^{-2}\over 4 \log D}$.  In the second integral, it is useful to break up the sum over $j,k$ to the $j=k$ and $j\neq k$ parts. Then we use the following lemma

\begin{lemma}
Let $V$ be a real vector space with orthonormal basis $\lbrace e_{i}\rbrace_{i=1}^{2m}$ and let $W,L\in \mathrm{Orth}(2m)$. Then for any $i,j\in \lbrace 1,\ldots, 2m\rbrace$ with $j\neq i$
\begin{align}
\int dW (e_{i}^{T}W^{T}LWe_{i})^{2} &= {\left( 2m+ \text{\emph{Tr}}L^{2}+(\text{\emph{Tr}}L)^{2} \right)\over 4m^{2}+4m} \nonumber \\
\int  dW (e_{i}^{T}W^{T}LWe_{j})^{2} &= \sum_{r=1}^{2m}{L_{r,r}^{2}\over 4m^{2}+4m} + \sum_{r\neq r'}{L_{r,r}L_{r',r'}(2m+1)\over 4m(m+1)(2m-1)} \nonumber \\
&\rightarrow {(\text{\emph{Tr}}L)^{2}\over 4m^{2}+4m} \text{ as } m\rightarrow \infty
\label{eqn:l2}
\end{align}
\end{lemma}

The proof of the lemma involves integration over the orthogonal group with respect to the Haar measure \cite{braun}. We will apply the lemma with $L=T^{T}_{S}O$, using the first integral from the lemma exactly to evaluate the $j=k$ part of the second integral in (\ref{eqn:yyy2}) and using the second integral from the lemma in its asymptotic form to evaluate the $j\neq k$ part of the second integral in (\ref{eqn:yyy2}). The result for integration over $W$ is
\begin{equation}
R_{O}(T_{S})={D^{2}-D^{-2}\over 4 \log D} - {(D^{2}-D^{-2})(2m+ \text{Tr}T_{S}^{T}OT_{S}^{T}O + (\text{Tr}T_{S}^{T}O)^{2} ) \over 16m(m+1)\log D} - { (D-D^{-1})^{2}(\text{Tr}T_{S}^{T}O)^{2} \over 8m(\log D)^{2}}
\label{eqn:r1}
\end{equation}

The assumption that $T_{S}$ and $O$ agree on the training dataset $S$ is now taken into account. Recall that the training covariance matrices $\Sigma^{(j)}$ are associated with distinct directions in $\mathbb{R}^{2m}$. Therefore, we can write $T_{S}^{T}O=\mathbb{I}_{\vert S\vert}\oplus Y$ with $Y\in \text{Orth}((2m-\vert S\vert))$. Note that $\int dY\; \text{Tr}Y = 0$, $\int dY\; \text{Tr}Y^{2}=1$, and $\int dY\; (\text{Tr}Y)^{2}=1$. From this it follows that $E_{O}\left( (\text{Tr}T_{S}^{T}O)^{2}\right)=E_{Y}((\vert S\vert+\text{Tr}Y)^{2})=\vert S\vert^{2}+1$ and $E_{O}(\text{Tr}\left[ (T_{S}^{T}O)^{2}\right])=E_{Y}(\vert S\vert+Y^{2})=\vert S\vert+1$. Applying these to (\ref{eqn:r1}) gives

\begin{equation}
E_{O}(R_{O}(T))= {D^{2}-D^{-2}\over 4 \log D} - {(D^{2}-D^{-2})(2m+ \vert S\vert^{2}+\vert S\vert+2 ) \over 16m(m+1)\log D} - { (D-D^{-1})^{2}(\vert S\vert^{2}+1) \over 8m(\log D)^{2}} + \mathcal{O}(m^{-2})
\end{equation}
where the $\mathcal{O}(m^{-2})$ comes from the asymptotic in (\ref{eqn:l2}). Absorbing the remaining $\mathcal{O}(m^{-2})$ terms and carrying out the trivial average over the finite set $S$ results in (\ref{eqn:r2}).\end{proof}

\end{document}